\DeclareRobustCommand{\qed}{  \ifmmode   \else \leavevmode\unskip\penalty9999 \hbox{}\nobreak\hfill
  \fi
  \quad\hbox{\qedsymbol}}
\newcommand{\openbox}{\leavevmode
  \hbox to.77778em{  \hfil\vrule
  \vbox to.675em{\hrule width.6em\vfil\hrule}  \vrule\hfil}}
\newcommand{\qedsymbol}{\openbox}
\newenvironment{proof}[1][\proofname]{\par
  \normalfont
  \topsep6\p@\@plus6\p@ \trivlist
  \item[\hskip\labelsep\itshape
    #1.]\ignorespaces
}{  \qed\endtrivlist
}
\newcommand{\proofname}{Proof}
\newcommand{\Gt}{\ensuremath{\tv{G}}} \newcommand{\Gf}{\ensuremath{G}} \makeatother
\begin{document}

\title{Frequency truncated discrete-time system norm\footnote{The material in this paper was partially presented at 22th International
	Symposium on Mathematical Theory of Networks and Systems (MTNS 2016), 
	July 11-15, 2016, Minneapolis, USA. }} 

\author{Hanumant Singh Shekhawat \\
Indian Institute of Technology Guwahati, Guwahati, India\\
h.s.shekhawat@iitg.ac.in  \\
}                                        
\date{}             

\maketitle

\begin{abstract}                          Multirate digital signal processing and model reduction applications require computation of the frequency truncated norm of a discrete-time system. This paper explains how to compute the frequency truncated norm of a discrete-time system. To this end, a much-generalized problem of integrating a transfer function of a discrete-time system given in the descriptor form over
an interval of limited frequencies is also discussed along with its computation.
\end{abstract}

\section{Introduction}
The  frequency truncated discrete-time system norm of a linear discrete-time-invariant $\Gt$ (with the 
transfer function $\Gf(z)$ in $z$-domain)
 defined as
\begin{align}\label{dftn}
\|\Gt\|^2_{[\theta_1, \theta_2]} 
   := \frac{1}{2\pi}\tr\int_{\theta_1}^{\theta_2} \Gf^\sim(\ejth) \Gf(\ejth) \dth, 
\end{align}
where the conjugate system $\Gf^\sim(z) := \Gf^*(\bar{z}^{-1})$ ($*$ is the adjoint operation and $\bar{z}$ is complex conjugate of $z$).
The need for the  frequency truncated discrete-time system norm arises naturally in the multi-rate discrete signal processing. For example, consider
a simple setup  of multi-rate discrete signal processing as shown in \figref{dtn:fig:setup}.
\begin{figure}[htbp]
	\centering
		\includegraphics[width=75mm]{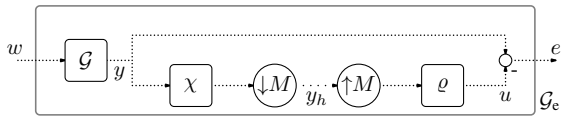}
	\caption{A setup  for multi-rate discrete signal processing}
	\label{dtn:fig:setup}
\end{figure}
Here, the input discrete signal ${y}$ is real and represented as the output of a system $\Gt$ driven by the discrete white Gaussian noise with zero mean. 
 The transfer function (in $z$-domain) of $\Gt$ is represented by $\Gf(z)$.
Hence, the power spectral density of  ${y}$ is given by $|\Gf(\ejth)|^2$ for
 all frequencies $\theta \in [-\pi,\pi]$.  $\chi$ is the analysis filter whose output is down-sampled by a factor $M$. The down-sampled  output
is again upsampled by a factor $M$ followed by a synthesis filter $\varrho$. The reconstructed output ${u}$ is compared with the input signal ${y}$.
The aim is to design both the analysis and synthesis filter   given $\Gf(z)$ in such a way that 
time averaged mean square error 
\begin{align*}
J = \lim_{N\to \infty}  \frac{1}{2N+1} \sum_{n=-N}^{N}E(e^2[n])
\end{align*}
 is minimised \cite{tsat}. Here, $E$ is the expectation operator. Assume that the $\Gt$ is stable and $\Gf(\ejth)$ is dominant in the frequency band $[-\frac{\pi}{M}, \frac{\pi}{M}]$ that means  $|\Gf(\e^{\jth_1})| > |\Gf(\e^{\jth_2})|$ if $\theta_1 \in [-\frac{\pi}{M}, \frac{\pi}{M}]$ and $\theta_2 \in [-\pi,\pi]\backslash [-\frac{\pi}{M}, \frac{\pi}{M}]$. In this case, optimal
 synthesis and analysis filter give   
 the error (see \cite{tsat} for details)
\begin{align}\label{dtn:me}
J =  \|\Gt\|_{\Ltwosys}^2 
                         - \frac{1}{2\pi}\tr\int_{-\frac{\pi}{M}}^{\frac{\pi}{M}}\Gf^\sim(\ejth) \Gf(\ejth) \dth.
\end{align}
Here, $\|\Gt\|_{\Ltwosys}:=\|\Gt\|_{[-\pi, \pi]}$ represents the $\Ltwosys$ norm of the system.  
Thus, we can see that truncated system norms naturally arises in multi-rate discrete signal processing.
For a general $\Gf(\ejth)$, computation of \eqref{dftn} is needed \cite{tsat}. 

It is further assumed that the discrete-time system $\Gt$ in \eqref{dftn} is 
linear discrete-time-invariant (LDTI) and its 
transfer function $\Gf(z)$ is a proper rational transfer function with real coefficients. These type of
systems can be represented  in state space  as $\Gf(z) = D+ C(z I-A)^{-1}B$ 
where $A,B,C$ and $D$ are 
real matrices. 
For simplicity of exposition, it is assumed that 
$D=0$ throughout this paper.
It will be shown later in the paper that evaluation of the  frequency truncated discrete-time system norm 
is a special case of the generalized problem of integration of a transfer function given in the descriptor form i.e\
\begin{align}\label{dtn:des}
\int_{\theta_1}^{\theta_2} C(\ejth E-A)^{-1}B \dth
\end{align}
where $A,B,C$, and $E$ are real matrices.
The above integral is also helpful in the evaluation of the frequency-domain controllability and
the observability Grammian of a system with transfer function $C(z I-A)^{-1}B$~\cite{wang,horta,peter,ghafoor}.

An expression for the   frequency truncated discrete-time system norm  for a stable discrete-time system is given in \cite[Theorem 3.8]{peter}
which depends upon invertibility of the $A$ matrix. This paper provides a  modification 
which resolves this problem for a stable discrete-time system. 
This paper further generalizes the result of \cite[Theorem 3.8]{peter} and provides an expression for \eqref{dftn} with minimal restriction on the system poles. 
 Integral of a transfer function (of a discrete-time system) in the descriptor form and a numerically viable expression for its computation is also given in the paper.  As a by-product, similar results for a
continuous time system given in the descriptor form are
briefly mentioned.

\secref{dtn:sec:stable} contains an expression for the    frequency  truncated  discrete-time system norm for a stable system and 
\secref{dtn:sec:gen} contains an expression for the    frequency  truncated discrete-time system norm for a generic case.
Integration of a transfer function given in the descriptor form is discussed in \secref{dtn:sec:gen} and a method for its computation 
is given in \secref{dtn:sec:comput}. \secref{dtn:sec:cont} contains results related to frequency  truncated  norm
of a continuous time system given in the descriptor form.

\emph{Notation:} $\mb{R}$ and $\mb{C}$ denote the set of real and complex numbers respectively. $\cnR$ denotes the closed negative real axis (i.e.\ the negative real axis including zero).
$\parg(z)$ denotes the principal argument i.e.\ argument of the complex number $z$ in $(-\pi,\pi]$ and  $\wrap(\theta):=\parg(\ejth)$ for all $\theta \in \mb{R}$.
For square complex matrices $A$ and $E$, the matrix pencil $(A,E)$ is called regular if $\alpha E + \beta A$ is invertible for at-least one set of complex numbers $\alpha$
and $\beta$. An eigenvalue $\lambda \in \mb{C}$ of a
matrix pencil $(A,E)$ satisfies $\det(A-\lambda E)=0$. We define $\rj:=\sqrt{-1}$.
$\sigma (A,E)$ denotes spectrum  (the set of eigenvalues) of the matrix pencil $(A,E)$. A function $f$ is \emph{defined} on $\sigma(A)$ 
if it follows definition $1.1$ of \cite{higham}.  
If $A\in \mb{C}^{n \times n}$ does not have any eigenvalues on
the $\cnR$ then there is a unique logarithm $Q=\log(A)$ whose all eigenvalues lie in the  
open horizontal strip $\{ z \in \mb{C} | −\pi < \imag( z ) < \pi\}$ of the
complex plane \cite[Theorem 1.31]{higham}. The $Q$ is known as the \emph{principal logarithm}.
The Fr\'{e}chet derivative of a matrix function $f: \mb{C}^{n\times n} \to \mb{C}^{n\times n}$ 
at $A$ in the direction $X$ is denoted by $L_{f}(Q,X)$ \cite[\S 3.1]{higham}.

\section{Stable Case}\label{dtn:sec:stable}
A discrete-time system $\Gt$ is defined stable if $A$ is Schur (i.e.,\ all eigenvalues of $A$ are strictly inside the unit circle in the complex plane).  
It is well known that for a stable discrete-time system $\Gt$,
the squared $L^2$ norm is given by 
\begin{equation}\label{hhh}
  \|\Gt\|^2_{\Ltwosys} :=
   \frac{1}{2\pi}\tr\int_{-\pi}^{\pi}\Gf^\sim(\ejth) \Gf(\ejth) \dth = \tr B^TPB
\end{equation}
where $P$ is the unique solution of the discrete Lyapunov equation
\begin{equation}\label{dlyap}
  A^TPA -P + C^T C=0.
\end{equation}
The  frequency truncation discrete-time system norm can be calculated by  evaluating  
an anti-derivative (or primitive) $\int  \Gf^\sim(\ejth) \Gf(\ejth) \dth$ first. 
\begin{theorem}\label{dtn:thm:stable}
Let discrete-time system $\Gt$ be stable and
  strictly proper and $\Gf(\ejth)=C(\ejth I-A)^{-1}B$ with A, B, C real matrices.
  Then, an anti-derivative $\int  \Gf^\sim(\ejth) \Gf(\ejth) \dth$ equals
  \begin{align*}
    B^T P B \theta
                            + 2\imag (B^T P \log(I - \e^{-\jth} A)B) \notag
  \end{align*}
  where $P$ is the unique solution of~\eqref{dlyap} and $\log$ denotes
  the principal logarithm and $\theta\in [-\pi,\pi]$.
\end{theorem}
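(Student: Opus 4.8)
Since the statement is an antiderivative claim, my plan is to verify it by differentiation: I would set $F(\theta):=B^{T}PB\,\theta+2\imag\bigl(B^{T}P\log(I-\e^{-\jth}A)B\bigr)$ and show that $\tfrac{d}{d\theta}F(\theta)$ reproduces the integrand $\Gf^\sim(\ejth)\Gf(\ejth)$. Two preliminary remarks make the computation clean. Because $A$ is Schur, every eigenvalue $1-\e^{-\jth}\lambda$ of $I-\e^{-\jth}A$ (with $\lambda\in\sigma(A)$, $|\lambda|<1$) has strictly positive real part, so $I-\e^{-\jth}A$ never meets $\cnR$; hence the principal logarithm and its $\theta$-derivative are well defined for every $\theta\in[-\pi,\pi]$, with no branch crossing. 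Moreover, the unique solution $P$ of~\eqref{dlyap} is symmetric and rearranges to $C^{T}C=P-A^{T}PA$, the relation I will substitute at the end.

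Differentiating term by term, $B^{T}PB\,\theta$ contributes the constant matrix $B^{T}PB$. The heart of the computation is the derivative of $\log(I-\e^{-\jth}A)$. Here $I-\e^{-\jth}A$ commutes with its $\theta$-derivative $\rj\e^{-\jth}A$, since both are polynomials in $A$; consequently the Fr\'echet-derivative chain rule collapses to the scalar-type formula $\tfrac{d}{d\theta}\log(I-\e^{-\jth}A)=(I-\e^{-\jth}A)^{-1}\rj\e^{-\jth}A$, which I would simplify using $I-\e^{-\jth}A=\e^{-\jth}(\ejth I-A)$ to $\rj A(\ejth I-A)^{-1}$. Because $A$, $B$ and $P$ are real, the entrywise conjugate of $\log(I-\e^{-\jth}A)$ is $\log(I-\e^{\jth}A)$ (the principal branch is preserved under conjugation), so writing $2\imag(Z)=-\rj(Z-\bar Z)$ and differentiating yields $B^{T}PA\bigl[(\ejth I-A)^{-1}+(\e^{-\jth}I-A)^{-1}\bigr]B$. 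Collecting terms, $\tfrac{d}{d\theta}F=B^{T}PB+B^{T}PA(\ejth I-A)^{-1}B+B^{T}PA(\e^{-\jth}I-A)^{-1}B$.

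It remains to identify this with $\Gf^\sim(\ejth)\Gf(\ejth)=B^{T}(\e^{-\jth}I-A^{T})^{-1}C^{T}C(\ejth I-A)^{-1}B$. I would substitute $C^{T}C=P-A^{T}PA$ and repeatedly apply the resolvent identity $A(\ejth I-A)^{-1}=\ejth(\ejth I-A)^{-1}-I$ (together with its conjugate and transpose) to both expressions; each collapses to a combination of $P$, the term $\ejth P(\ejth I-A)^{-1}$, and $\e^{-\jth}$ times a single resolvent factor. The two forms then coincide after invoking the symmetry of $P$ and the cyclic invariance of the trace $\tr$ appearing in~\eqref{dftn}: the two sides differ only by $\e^{-\jth}B^{T}\bigl[(\e^{-\jth}I-A^{T})^{-1}P-P(\e^{-\jth}I-A)^{-1}\bigr]B$, whose trace vanishes because $P=P^{T}$. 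Thus $F$ is the required primitive in the sense needed to evaluate the norm.

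I expect the matrix logarithm to be the main obstacle. The delicate points are justifying the commuting simplification of the Fr\'echet derivative of $\log$ and, more importantly, confirming that the principal branch really is the correct antiderivative uniformly on $[-\pi,\pi]$ — this is precisely where Schur stability enters, keeping the spectrum of $I-\e^{-\jth}A$ in the open right half-plane so that no eigenvalue crosses $\cnR$ as $\theta$ sweeps the interval. The algebra in the matching step is routine once the resolvent identity and the Lyapunov relation are in hand, but it still needs care with the nonsymmetric resolvent term, whose reconciliation rests on $P$ being symmetric and on the trace that defines the norm.
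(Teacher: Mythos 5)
Your proof is correct, and while it assembles the same ingredients as the paper --- Schur stability forcing the spectrum of $I-\e^{-\jth}A$ into the open right half-plane so the principal logarithm exists for all $\theta\in[-\pi,\pi]$, the Lyapunov substitution $C^TC=P-A^TPA$, and the splitting of the integrand into $B^TPB$ plus a resolvent term and its conjugate partner --- it runs the argument in the opposite direction. The paper \emph{constructs} the primitive: it introduces the scalar stem function $f(z,\theta)=-\rj z^{-1}\log(1-\e^{-\jth}z)$ (with the removable singularity at $z=0$ filled in), checks $\partial f/\partial\theta=(\ejth-z)^{-1}$, and invokes \cite[Theorem 6.2.27]{horn} to conclude that $f(A,\theta)$ is an antiderivative of the resolvent; the $z^{-1}$ lives only in the stem function, never as an actual $A^{-1}$, which is precisely how the result escapes the invertibility hypothesis of \cite[Theorem 3.8]{peter}. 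You instead \emph{verify} by differentiating the claimed primitive, using the commuting-family simplification $L_{\log}(M,X)=M^{-1}X$ for $X$ commuting with $M$ --- legitimate here, since $I-\e^{-\jth}A$ and its $\theta$-derivative are both polynomials in $A$ --- and your route also never touches $A^{-1}$, because differentiation produces $A(\ejth I-A)^{-1}$ rather than requiring its integral; this is a genuine economy, as it sidesteps the functional-calculus citation entirely. One wrinkle deserves mention: because you conjugate entrywise, your $F'$ reproduces the integrand only up to the transpose of one term, and you retreat to a trace identity justified by $P=P^T$. That step is sound for the norm \eqref{dftn}, but it is only needed because of the entrywise reading of $\imag$. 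Under the Hermitian reading $\imag(Z):=(Z-Z^*)/(2\rj)$ --- which is what the paper's own pairing of $\ejth B^T(I-\ejth A^T)^{-1}A^TPB$ with its conjugate transpose implicitly delivers --- your identical computation closes at the matrix level with no trace at all: $\tfrac{d}{d\theta}\,2\imag(Z)=T+T^*$ with $T=B^TPA(\ejth I-A)^{-1}B$, which is exactly the integrand minus $B^TPB$.
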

\begin{proof}
Consider function 
\begin{align*}
   f(z,\theta) = \begin{cases}
             -\rj z^{-1}\log(1 - \e^{-\jth} z) & \text{if } z\ne 0 \\
              \rj \e^{-\jth} & \text{if } z = 0
          \end{cases}
\end{align*}
For a given $\theta \in [-\pi,\pi]$, $f(z,\theta)$ is analytic in the open unit disk around zero (in the complex plane) as 
$1 - \e^{-\jth} z$ never lies on the closed negative real axis $\cnR$.
Clearly,
  \begin{align*}
  \pd{f(z,\theta)}{\theta} =   (\ejth -  z)^{-1}
  \end{align*}
 is also analytic (for a given $\theta \in [-\pi,\pi]$) in the open unit disk around zero.
Hence, it follows from \cite[Theorem 6.2.27]{horn}
  that $f(A,\theta)$ is an anti-derivative of  $(\ejth I-  A)^{-1}$.
 Also, note that
 \begin{align*}
 \Gf^\sim(\ejth )& \Gf(\ejth ) =  B^T (I-\ejth A^T )^{-1}PB \\
 &+ B^T PA (\ejth I-A)^{-1} B \\
 =&  B^TPB+ 	\ejth  B^T	(I-\ejth A^T)^{-1}	A^TPB \\
 & + B^T PA (\ejth I-A)^{-1} B.
 \end{align*}
 Now, using the anti-derivative of  $(\ejth I-  A)^{-1}$ and the fact that integration  (w.r.t.\ $\theta$) 
 of the complex conjugate is the conjugate of integration,  we have the result.
\end{proof}
The proof of the \thmref{dtn:thm:stable} is essentially similar 
to the proof of \cite[Theorem 3.8]{peter} without the need for inversion of the $A$ matrix.  
Using \cite[Theorem 1.31]{higham}, we have
\begin{align*}
\|\Gt\|_{[-\pi,\pi]}^2 = \tr B^T P B  = \|\Gt\|_{\Ltwosys}^2.
\end{align*}

\section{General case}\label{dtn:sec:gen}

In the previous section, the poles of a discrete-time system must be in the unit circle. We know that integration of a meromorphic function is possible 
as long as we are not integrating over a pole. Hence, systems with poles on the unit circle as well as inside and outside of the unit circle
(apart from poles within the limits of integration) would be a more general case.
This section is about the integration of a transfer function given in the descriptor form (see \eqref{dtn:des}) for the general case.
This will further help in obtaining an expression for the    frequency  truncated  discrete-time system norm in the general case.
It is assumed that eigenvalue of the matrix pencil $(A,E)$ can lie on the unit circle as well as inside and outside of the unit circle.  
The result needs logarithm of matrices as expected. However, there are few mathematical technicalities which we have to take care.

The first issue is  $\int (\ejth E-A)^{-1} \dth$ is a function of two matrices $E$ and $A$.
Hence, the definition  of a matrix function given in \cite{higham} does not help here as it is.
If $A$ or $E$ matrix is invertible 
then $\int (\ejth E-A)^{-1} \dth$ can be written as a function of $EA^{-1}$ or $E^{-1}A$ respectively.
However, the situation is a little complicated if both $E$ and $A$ are singular. Things can be simplified if we assume 
that the matrix pencil $(A,E)$ is regular.
To illustrate this further, assume that the $(A,E)$ is regular and $\alpha \ne 0$ then 
\begin{align*}
(zE-A)^{-1} &= \alpha (\alpha E + \beta A)^{-1} 
(z(I - \beta Q) - \alpha Q)^{-1} 
\end{align*}
where $Q:=A (\alpha E + \beta A)^{-1}$. 
Hence, we need $\int \alpha (z(I - \beta Q) - \alpha Q)^{-1}  \dth$ which is a function of just one matrix 
$Q$. If $\alpha=0$ then $A$ is invertible. This case has been  discussed already.

The second issue is related to the principal logarithm  of a matrix as it does not exist if eigenvalues of the matrix lie 
on the closed negative real axis $\cnR$. The critical task here is to choose 
the right anti-derivative of $\alpha (z(I - \beta Q) - \alpha Q)^{-1}$ such that the principal logarithm 
is defined. Furthermore, $Q$ can have eigenvalues on the unit circle as well as inside and outside of 
the unit circle. Hence, obtaining the right anti-derivative is quite challenging. 
For example, assume $\beta=0$, $\alpha=1$ and $A$ has at least one eigenvalue outside the unit circle.
In this case, $Q=A$ and $\log(I - \e^{-\jth} A)$ (which we obtained in the stable case) is not a valid anti-derivative of  $(z I - A)^{-1}$ as 
there exists a value of $\theta \in [-\pi,\pi]$ where eigenvalues of $I - \e^{-\jth} A$ lie on $\cnR$.
In this work, the right anti-derivative is obtained by the well-known tangent half-angle substitution i.e.\
\begin{align}\label{dtn:subg}
\ejth = \frac{1+\rj t}{1-\rj t} = -\frac{t-\rj }{t +\rj }
\end{align}
Here  $t=\tan(\frac{\theta}{2})$. 
Selection of the right anti-derivative is also an issue in \cite{hstn} which was solved by  taking $\rj$ out of the integrand whenever necessary. 
This technique is also used here along with the half-angle substitution.
\begin{theorem} \label{dtn:thm:main}
	Assume that a discrete-time system $K$ can be represented in the descriptor form as 
	$K(z)=(zE-A)^{-1}$ with real matrices $A$ and $E$.
	Also, assume that $\theta_1,\theta_2 \in (-\pi,\pi)$ and  the matrix pencil $(A,E)$ is regular i.e.\ $W:=\alpha E + \beta A$ is invertible for at-least one set of 
	 complex numbers
	  $\alpha$ 	and $\beta$.
	If $\e^{\rj \phi}$  not an  eigenvalue of $(A,E)$ for any $\phi\in\mb{R}$ and $\wrap(\phi) \in [\theta_1,\theta_2]$ then
	\begin{subequations}
		\label{dtn:main}
	\begin{align}
    &\int_{\theta_1}^{\theta_2} {K}(\ejth)\dth \notag\\
	&= \lim_{\eps\to 0}\frac{1}{ \rj}A_\eps ^{-1}\left(  -\eta I + 
	\log\left(\Gamma(\theta_2,\eps)\Gamma(\theta_1,\eps)^{-1}\right)  \right) \label{dtn:main1}\\
	&= \lim_{\eps\to 0}\frac{1}{ \rj}  \left(  -\eta I + 
	\log\left(\Gamma(\theta_1,\eps)^{-1}\Gamma(\theta_2,\eps)\right)  \right)A_\eps ^{-1} \label{dtn:main2}
	\end{align}
	\end{subequations}
	where 	$A_\eps :=A+ \alpha\eps I$, $E_\eps:=E-\beta\eps I$, $\eps\in\mb{C}$, $\eta:=\log\left(\frac{\tan(0.5\theta_2)-\rj}{\tan(0.5 \theta_1)-\rj}\right)$ and 
    $\Gamma(\theta,\eps) := (E_\eps +A_\eps )\tan(0.5\theta) -\rj (E_\eps- A_\eps ) $.  
\end{theorem}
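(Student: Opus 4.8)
The plan is to convert the $\theta$-integral into an integral over $t:=\tan(0.5\theta)$ by the tangent half-angle substitution \eqref{dtn:subg}, to identify a genuine antiderivative in $t$, and then to recover the stated formula by a limiting argument in $\eps$. First I would substitute $\ejth=(1+\rj t)/(1-\rj t)$ and $\dth=\frac{2}{1+t^2}\,dt$ into $(\ejth E-A)^{-1}$. A direct computation gives $(1+\rj t)E-(1-\rj t)A=\rj\bigl((E+A)t-\rj(E-A)\bigr)=\rj\,\Gamma(\theta,0)$, so that, after simplifying $\frac{2(1-\rj t)}{1+t^2}=\frac{2}{1+\rj t}$ and using $1+\rj t=\rj(t-\rj)$, the integrand becomes $(\ejth E-A)^{-1}\dth=\frac{-2}{t-\rj}\,\Gamma(\theta,0)^{-1}\,dt$. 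The two facts I would record for later are that $\Gamma(\theta,0)$ is affine in $t$ with constant derivative $E+A$, and that its value at $t=\rj$ collapses to $2\rj A$; the latter is the eventual source of the factor $A^{-1}$ in the answer.

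Next I would introduce the regularisation. Replacing $A,E$ by $A_\eps,E_\eps$ I note that $\alpha E_\eps+\beta A_\eps=\alpha E+\beta A=W$ is invertible for every $\eps$, so the perturbed pencil stays regular; moreover $A_\eps$ and $E_\eps+A_\eps=(E+A)+(\alpha-\beta)\eps I$ are invertible for all small $\eps\ne 0$ outside a finite set. The role of $\eps$ is to push the eigenvalues entering the logarithms off the closed negative real axis $\cnR$, so that the principal logarithm is defined along the whole path $t\in[t_1,t_2]$ with $t_i:=\tan(0.5\theta_i)$. The hypothesis that no $\e^{\rj\phi}$ with $\wrap(\phi)\in[\theta_1,\theta_2]$ is an eigenvalue of $(A,E)$ guarantees that $\Gamma(\theta,0)$ stays invertible on this path, i.e.\ there is no pole inside the contour.

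With $E_\eps+A_\eps$ invertible I would factor $\Gamma(\theta,\eps)=(E_\eps+A_\eps)(tI-N_\eps)$, where $N_\eps:=\rj(E_\eps+A_\eps)^{-1}(E_\eps-A_\eps)$, so that $\Gamma(\theta,\eps)^{-1}=(tI-N_\eps)^{-1}(E_\eps+A_\eps)^{-1}$ and the troublesome factor $tI-N_\eps$ has derivative $I$, commuting with itself. A scalar-type partial fraction, valid because $tI-N_\eps$ and $\rj I-N_\eps$ commute, gives
\[
\tfrac{1}{t-\rj}(tI-N_\eps)^{-1}=(\rj I-N_\eps)^{-1}\bigl(\tfrac{1}{t-\rj}I-(tI-N_\eps)^{-1}\bigr),
\]
and integrating term by term yields the scalar $\log(t-\rj)$ and the matrix principal logarithm $\log(tI-N_\eps)$, both now well defined by the choice of $\eps$. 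Evaluating between the endpoints, the scalar term produces $\eta$, while $\log(t_2I-N_\eps)-\log(t_1I-N_\eps)$ collapses to the single logarithm $\log\bigl((t_2I-N_\eps)(t_1I-N_\eps)^{-1}\bigr)$ because $t_1I-N_\eps$ and $t_2I-N_\eps$ are polynomials in $N_\eps$ and hence commute. Undoing the factorisation through $tI-N_\eps=(E_\eps+A_\eps)^{-1}\Gamma(\theta,\eps)$ turns this into the similarity $(E_\eps+A_\eps)^{-1}\log\bigl(\Gamma(\theta_2,\eps)\Gamma(\theta_1,\eps)^{-1}\bigr)(E_\eps+A_\eps)$, and the identity $\rj I-N_\eps=2\rj(E_\eps+A_\eps)^{-1}A_\eps$ (the $t=\rj$ collapse recorded above) reduces $(\rj I-N_\eps)^{-1}(E_\eps+A_\eps)^{-1}$ to $\frac{1}{2\rj}A_\eps^{-1}$. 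Collecting constants produces \eqref{dtn:main1}; the mirror factorisation $\Gamma=(tI-N_\eps')(E_\eps+A_\eps)$ yields \eqref{dtn:main2}.

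Finally I would pass to the limit $\eps\to0$: since the original integrand has no pole on $[\theta_1,\theta_2]$, the perturbed integrand converges to it uniformly on the compact interval and the integrals converge, while the right-hand side converges by continuity of the matrix inverse and logarithm at their invertible, $\cnR$-avoiding limiting arguments. The step I expect to be the main obstacle is precisely the justification that these principal logarithms are defined and differentiable along the entire path and that the branch identity $\log X-\log Y=\log(XY^{-1})$ applies; this is exactly what forces both the tangent half-angle substitution (keeping the scalar factor $t-\rj$ safely in the lower half-plane) and the $\eps$-regularisation (keeping the spectrum of $tI-N_\eps$ off $\cnR$), whereas the algebraic manipulations are routine once the correct antiderivative has been pinned down.
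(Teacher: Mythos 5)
Your half-angle reduction, the factorisation $\Gamma(\theta,\eps)=(E_\eps+A_\eps)(tI-N_\eps)$ with $N_\eps:=\rj(E_\eps+A_\eps)^{-1}(E_\eps-A_\eps)$, the partial-fraction identity, and the collapse $\rj I-N_\eps=2\rj(E_\eps+A_\eps)^{-1}A_\eps$ are all algebraically correct, and they do reproduce \eqref{dtn:main1} formally. The genuine gap is the step where you assert that the separate principal logarithms $\log(tI-N_\eps)$ are ``well defined by the choice of $\eps$'' along the whole path $t\in[t_1,t_2]$: you have misread what $\eps$ is for. The hypothesis only excludes unit-circle eigenvalues $\e^{\rj\phi}$ of $(A,E)$ with $\wrap(\phi)\in[\theta_1,\theta_2]$; the theorem is expressly aimed at pencils with eigenvalues elsewhere on the unit circle (this is the paper's stated motivation for the general case). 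If $\e^{\rj\phi_0}$ is an eigenvalue with $\wrap(\phi_0)\in(\theta_2,\pi)$, then $N_0$ has the \emph{real} eigenvalue $\mu=\tan(\phi_0/2)>t_2$, so $t_1-\mu<0$ lies on $\cnR$ and the principal logarithm $\log(t_1I-N_0)$ does not exist; moreover it keeps failing for the $\eps$ on the curve through $0$ along which the perturbed eigenvalue stays on the unit circle, so ``for all small $\eps$'' is false and the unrestricted limit $\eps\to 0$ in the statement cannot be evaluated through your intermediate formula. By contrast, the combined logarithm $\log\left(\Gamma(\theta_2,\eps)\Gamma(\theta_1,\eps)^{-1}\right)$ has eigenvalues $(t_2-\mu)/(t_1-\mu)$, which are positive for real $\mu\notin[t_1,t_2]$, so it exists already at $\eps=0$: in the paper $\eps$ has nothing to do with branch existence --- it only regularises a possibly singular $A$ (the $A_\eps^{-1}$ factor), and the limit is justified via Fr\'{e}chet differentiability of the matrix function, not by pushing spectra off $\cnR$.

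This is precisely why the paper never splits the logarithm: it takes the scalar antiderivative $f_d(z,\theta)=\frac{1}{\rj z}\left(-\eta+\log\frac{\Omega(z,t)}{\Omega(z,t_1)}\right)$, with the ratio inside a \emph{single} principal log, proves in \thmref{dtn:thm:fd} that it is analytic off the arc-preimage $\hat{D}$, and then applies the antiderivative theorem for primary matrix functions (Horn--Johnson, Theorem 6.2.27) to $Q=AW^{-1}$. Your route could in principle be repaired --- restrict to $\eps$ off the (nowhere-dense, since $-\alpha/\beta$ cannot be a pencil eigenvalue when $W$ is invertible) set where $N_\eps$ has a real eigenvalue in $[t_1,\infty)$, establish the formula there, and extend to the stated limit by continuity of both sides in $\eps$ --- but you would also have to fix a second flaw: your justification of $\log(t_2I-N_\eps)-\log(t_1I-N_\eps)=\log\bigl((t_2I-N_\eps)(t_1I-N_\eps)^{-1}\bigr)$ by commutativity alone is invalid even for scalars (take $x=\e^{3\rj\pi/4}$, $y=\e^{-3\rj\pi/4}$: $\log x-\log y=3\rj\pi/2\neq\log(x/y)=-\rj\pi/2$). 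The identity needs an argument condition of the kind in Higham's Theorem 11.3, which does hold here eigenvalue-wise because $t_1-\mu$ and $t_2-\mu$ lie in a common half-plane when $\mu$ is nonreal or real below $t_1$, but that verification --- the actual branch bookkeeping that is the heart of this theorem --- is exactly what your proposal leaves out.
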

\begin{proof} 
 Assume that $\alpha \ne 0$. Define
 \begin{align*}
 f_d(z,\theta) := \begin{cases}
 \frac{1}{\rj z}\left( -\eta + \log\frac{\Omega(z,t)}{\Omega(z,t_1)} \right), & z\ne 0 \\
 \frac{\alpha }{\rj}(\e^{-\jth_1} -\emjth) & z=0
 \end{cases}
 \end{align*}
 where $t:= \tan(\frac{\theta}{2})$ and $t_1:= \tan(\frac{\theta_1}{2})$ and $\Omega(z,t):=t (1-\beta z +\alpha z)- \rj ( 1-\beta z -\alpha z)$ for a real $t$.
 If $z$ is an eigenvalue of $Q:=A (\alpha E + \beta A)^{-1}$ then $\frac{z \alpha}{1- \beta z }$ is also an eigenvalue of $(A,E)$.
Hence, for all $\theta\in [\theta_1,\theta_2]$, $f_d(z,\theta)$ is defined on $\sigma(Q)$  as long 
 as $\e^{\rj \phi}$ not an eigenvalue of $(A,E)$ for any	$\wrap(\phi) \in [\theta_1,\theta_2]$ (see \thmref{dtn:thm:fd}).
	 	 Also,
	 \begin{align*}
	 \pd{f_d(z,\theta)}{\theta} &=  \alpha  \frac{1}{\ejth (1-\beta z) -\alpha z}. 
	 \end{align*}
	 	 Now, it follows from \thmref{dtn:thm:fi} and \cite[Theorem 6.2.27]{horn} that
	 $ \int_{\theta_1}^{\theta_2} {K}(\ejth)\dth = W^{-1} \int_{\theta_1}^{\theta_2} \alpha (z(I - \beta Q) - \alpha Q)^{-1}  \dth =  W^{-1} f_d(Q,\theta_2)$.
	 From \cite[Theorem 3.8]{higham} and \thmref{dtn:thm:fd}, the Fr\'{e}chet derivative $L_{f_d}(Q,X)$ of $f_d$ at $Q$ in the direction $X$ exists. Hence,
	 $f_d(Q+ \eps\alpha W^{-1}) = f_d(Q) + \eps L_{f_d}(Q,\alpha W^{-1}) +  o(|\eps\alpha|\|W^{-1}\|)$. 
	 Hence, we have \eqref{dtn:main1}. Equation \eqref{dtn:main2} follows from \cite[Theorem 1.13.c]{higham}. 
	 
	 On the other hand, if $\alpha =0$ then $A$ is invertible and $\beta \ne 0$ by the regularity of $(A,E)$.
	 Hence, $ \int_{\theta_1}^{\theta_2} {K}(\ejth)\dth = A^{-1} \int_{\theta_1}^{\theta_2} (\ejth EA^{-1}-I)^{-1}\dth$.
	 Define 
	 for a complex $z$
	 \begin{align*}
	 f_i(z,\theta) := 
	 \frac{1}{\rj}\left( -\eta + \log\frac{t(z+1) -\rj (z-1) }{t_1(z+1) -\rj (z-1)} \right)
	 \end{align*}
	 where $t:= \tan(\frac{\theta}{2})$ and $t_1:= \tan(\frac{\theta_1}{2})$.  
	  If $z$ is an eigenvalue of $EA^{-1}$  then $\frac{1}{z}$ is an eigenvalue of $(A,E)$.
	 Hence, for all $\theta\in [\theta_1,\theta_2]$, 
	 $f_i(z,\theta)$ is defined on all eigenvalues of $EA^{-1}$  as long 
	 as $\e^{\rj \phi}$ not an eigenvalue of $(A,E)$ for any	$\wrap(\phi) \in [\theta_1,\theta_2]$ (see \thmref{dtn:thm:fi}).
	 	  Now, it follows from \thmref{dtn:thm:fi} and \cite[Theorem 6.2.27]{horn} that
	  $ \int_{\theta_1}^{\theta_2} {K}(\ejth)\dth = A^{-1} \int_{\theta_1}^{\theta_2}  (\ejth EA^{-1}-I) \dth = A^{-1} f_i(EA^{-1},\theta_2)$. Equivalence to the limits can be proved in a manner similar the $\alpha \ne 0$ case.

	\end{proof}

 Equation \eqref{dtn:main} can be extended for $\theta=\pi$ as shown in the following 
result.
\begin{corollary}\label{dtn:coro:inf}
	Let $k(z)$ be as in \thmref{dtn:thm:main}. 
	Assume that $\theta_1 \in (-\pi,\pi]$ and  the matrix pencil $(A,E)$ is regular i.e.\ $W:=\alpha E + \beta A$ is invertible for at-least one set of 
	complex numbers
	 $\alpha$ 	and $\beta$.
	 If $\e^{\rj \phi}$  not an  eigenvalue of $(A,E)$ for any $\phi\in\mb{R}$ and $\wrap(\phi) \in [\theta_1,\pi]$ then
	 $\int_{\theta_1}^{\pi} {K}(\ejth)\dth$ equals
	 	 	 	\begin{align}
	 	\label{dtn:inf}
	 	 	&\lim_{\eps\to 0}\frac{1}{ \rj}A_\eps ^{-1}\left( \eta_f I - 
	 	\log\left(t_1 I -\rj \Phi_{-}(\eps)\Phi_{+}(\eps)^{-1}\right)  \right) \notag\\
	 	&= \lim_{\eps\to 0}\frac{1}{ \rj}  \left(  \eta_f I - 
	 	\log\left(t_1 I- \rj \Phi_{+}(\eps)^{-1}\Phi_{-}(\eps)\right)  \right)A_\eps ^{-1}
	 	\end{align}
	 	 where $t_1:= \tan(\frac{\theta_1}{2})$,	$A_\eps :=A+ \alpha\eps I$, $E_\eps:=E-\beta\eps I$, $\eps\in\mb{C}$, $\eta_f:=\log(t_1-\rj)$,
	 $\Phi_{+}(\eps) = E_\eps + A_\eps$ and $\Phi_{-}(\eps) = E_\eps - A_\eps$. 
\end{corollary}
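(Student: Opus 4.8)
The plan is to derive \eqref{dtn:inf} by letting $\theta_2 \to \pi^-$ in \thmref{dtn:thm:main}. First I would observe that, because $\wrap(\pi)=\pi \in [\theta_1,\pi]$, the hypothesis rules out $-1=\e^{\rj\pi}$ as an eigenvalue of $(A,E)$, so the integrand $(\ejth E-A)^{-1}$ is continuous on the whole closed interval $[\theta_1,\pi]$ and $\int_{\theta_1}^{\pi}{K}(\ejth)\dth = \lim_{\theta_2\to\pi^-}\int_{\theta_1}^{\theta_2}{K}(\ejth)\dth$ holds as a proper integral. Thus it suffices to evaluate the right-hand side of \eqref{dtn:main1} in the limit $\theta_2\to\pi^-$, i.e.\ with $t_2:=\tan(0.5\theta_2)\to+\infty$ while $t_1:=\tan(0.5\theta_1)$ and $\eps$ are held fixed.

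Next I would carry out the $t_2\to\infty$ algebra. Writing $\Gamma(\theta_2,\eps)=\Phi_{+}(\eps)\,t_2-\rj\Phi_{-}(\eps)=t_2\bigl(\Phi_{+}(\eps)-\tfrac{\rj}{t_2}\Phi_{-}(\eps)\bigr)$ and using that $t_2>0$ is a positive real scalar, the principal logarithm splits as $\log(\Gamma(\theta_2,\eps)\Gamma(\theta_1,\eps)^{-1}) = (\log t_2)\,I + \log\bigl((\Phi_{+}(\eps)-\tfrac{\rj}{t_2}\Phi_{-}(\eps))\Gamma(\theta_1,\eps)^{-1}\bigr)$. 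Since $\eta=\log\frac{t_2-\rj}{t_1-\rj}$, the scalar part of $-\eta I$ contributes $-\log(t_2-\rj)+\eta_f$ with $\eta_f=\log(t_1-\rj)$, and the two divergent scalars combine into $\log\frac{t_2}{t_2-\rj}\to 0$. Passing the now convergent matrix logarithm through the limit leaves the bracket in \eqref{dtn:main1} equal to $\eta_f I + \log\bigl(\Phi_{+}(\eps)\Gamma(\theta_1,\eps)^{-1}\bigr)$.

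The decisive simplification uses that $E+A$ is invertible, which is exactly the content of $-1\notin\sigma(A,E)$, so that $\Phi_{+}(\eps)=E+A+(\alpha-\beta)\eps I$ is invertible for all sufficiently small $\eps$. Factoring $\Gamma(\theta_1,\eps)=\bigl(t_1 I-\rj\Phi_{-}(\eps)\Phi_{+}(\eps)^{-1}\bigr)\Phi_{+}(\eps)$ gives $\Phi_{+}(\eps)\Gamma(\theta_1,\eps)^{-1}=\bigl(t_1 I-\rj\Phi_{-}(\eps)\Phi_{+}(\eps)^{-1}\bigr)^{-1}$, and the identity $\log(M^{-1})=-\log M$ for the principal logarithm then turns $\eta_f I + \log(\Phi_{+}(\eps)\Gamma(\theta_1,\eps)^{-1})$ into $\eta_f I - \log\bigl(t_1 I-\rj\Phi_{-}(\eps)\Phi_{+}(\eps)^{-1}\bigr)$, which is the first line of \eqref{dtn:inf}. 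The second line follows from \cite[Theorem 1.13.c]{higham} exactly as \eqref{dtn:main2} follows from \eqref{dtn:main1}, and the degenerate branch $\alpha=0$ (so that $A$ is invertible) is handled identically after the same half-angle substitution.

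The main obstacle is justifying the exchange of the two limits, since \eqref{dtn:main1} already carries an outer $\lim_{\eps\to0}$ while I have been computing the inner $\theta_2\to\pi$ limit at fixed $\eps$. I would resolve this by reading the pre-limit expression in \eqref{dtn:main1} as the integral $\int_{\theta_1}^{\theta_2}(\ejth E_\eps-A_\eps)^{-1}\dth$ of the perturbed pencil $(A_\eps,E_\eps)$ (note $\alpha E_\eps+\beta A_\eps=W$ and that $A_\eps$ is invertible for generic small $\eps$, so \thmref{dtn:thm:main} applies to this pencil). Because $(A,E)$ has no eigenvalue on the compact arc $\{\e^{\rj\phi}:\wrap(\phi)\in[\theta_1,\pi]\}$, continuity of the pencil spectrum keeps $(A_\eps,E_\eps)$ eigenvalue-free there for small $\eps$, whence $(\ejth E_\eps-A_\eps)^{-1}\to(\ejth E-A)^{-1}$ uniformly on $[\theta_1,\pi]$; uniform convergence of the integrands (Moore--Osgood) then legitimizes interchanging $\eps\to0$ with $\theta_2\to\pi$. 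Throughout I would also verify, using the eigenvalue conditions behind \thmref{dtn:thm:main}, that every matrix whose principal logarithm is taken stays off $\cnR$ along the limit, so that the splitting $\log(t_2 M)=(\log t_2)I+\log M$, the continuity of $\log$, and the identity $\log(M^{-1})=-\log M$ are all valid.
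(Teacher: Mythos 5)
Your proof is correct and takes essentially the paper's route: both derive \eqref{dtn:inf} by letting $\theta_2\to\pi$ (so $t_2\to\infty$) in \thmref{dtn:thm:main} and splitting the principal logarithm so that the divergent scalar part cancels against $\eta$ --- the paper does this via \cite[Theorems 11.2, 11.3]{higham} applied to the commuting factors $tI-\rj\tilde{Q}$ and $t_1I-\rj\tilde{Q}$ with $\tilde{Q}=\Phi_{-}(\eps)\Phi_{+}(\eps)^{-1}$, followed by \cite[Lemma 5(2)]{hstn} for the $t\to\infty$ limit, whereas you factor out the positive scalar $t_2$ and use $\log(M^{-1})=-\log M$, which amounts to the same cancellation. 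Your explicit Moore--Osgood justification for interchanging $\eps\to 0$ with $\theta_2\to\pi$, obtained by reading the pre-limit expression as $\int_{\theta_1}^{\theta_2}(\ejth E_\eps-A_\eps)^{-1}\dth$ for the perturbed pencil, is sound and in fact spells out a step that the paper's own proof leaves implicit.
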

\begin{proof}
Since $-1$ is not an eigenvalue of the matrix pencil $(A,E)$, $E+A$ is invertible. 
Define $t:= \tan(\frac{\theta}{2})$, $Q_\eps:=A_\eps W^{-1}$ and $\tilde{Q}:=(I-\beta Q_\eps   - \alpha Q_\eps  )(I-\beta Q_\eps   + \alpha Q_\eps  )^{-1}$.
Now, 
\begin{align*}
\log\left(t I -\rj (E_\eps - A_\eps )(E_\eps +A_\eps )^{-1}\right) = \log\left(t I -\rj \tilde{Q}\right)
\end{align*}
For sufficiently small $|\eps|$, the above logarithm exits if  $\e^{\rj \phi}$  not an  eigenvalue of $(A,E)$ for any $\phi\in\mb{R}$ and $\wrap(\phi) \in [\theta_1,\pi]$ (the proof is similar to the proof \thmref{dtn:thm:fd}).
Hence, \cite[Theorem 11.3,Theorem 11.2]{higham} implies that 
\begin{align*}
\log\left(\frac{t-\rj}{t_1-\rj}\right) &= \log(t-\rj) -\log(t_1-\rj)\\
\log( Q_t)&=\log(t I -\rj \tilde{Q}) -\log(t_1 I -\rj \tilde{Q}).
\end{align*}
where $Q_t:=(t I -\rj \tilde{Q}) (t_1 I -\rj \tilde{Q})^{-1}$.
Now, using
$\int_{\theta_1}^{\pi} {K}(\ejth)\dth = \lim_{\theta \to \pi}\int_{\theta_1}^{\theta_2} {K}(\ejth)\dth$ 
the result follows from \cite[Lemma 5(2)]{hstn}.
\end{proof}
Similar results can be obtained if integral limits are $[-\pi,\theta_1]$ or $[-\pi,\pi]$.
Note that if $A$ is invertible then \eqref{dtn:main} and \eqref{dtn:inf} can be simplified by taking 
$\eps=0$. Otherwise, \eqref{dtn:main} and \eqref{dtn:inf} needs a proper limit.
\secref{dtn:sec:comput} contains other forms of \eqref{dtn:main} and \eqref{dtn:inf} which are independent of any limit. However,
the current form is useful in obtaining a simple expression for the  frequency truncated discrete-time system norm
as explained in the following result.
\begin{theorem} 
 	Suppose a discrete-time system $\Gt$ can be represented in state-space as $\Gf(z)=C(zI-A)^{-1}B$ with real matrices A, B, and  C. 
 	Define
 	\begin{align*}
 		A_h &:= \begin{bmatrix}  A& 0 \\ C^T C  & I \end{bmatrix},  \quad
 		E_h:= \begin{bmatrix}  I & 0\\ 0  &A^T\end{bmatrix}, \quad\\
 		C_h&:=\begin{bmatrix} 0 & -B^T \end{bmatrix}, \quad \text{and}\quad
 		B_h:=\begin{bmatrix} B \\ 0 \end{bmatrix}.
 	\end{align*}
 
 	Assume that $\theta_1,\theta_2 \in (-\pi,\pi)$.
 	If $\e^{\rj \phi}$ not an eigenvalue of $A$ for any $\phi\in\mb{R}$ such that $wrap(\phi) \in [\theta_1,\theta_2]\cup[-\theta_1,-\theta_2]$ 
 	then
 	\begin{align*}
  		\int_{\theta_1}^{\theta_2}  \Gf^\sim(\ejth) \Gf(\ejth) \dth = 
 		\frac{1}{ \rj}C_h
 		\log\left(\Gamma_d(\theta_1)^{-1}\Gamma_d(\theta_2)\right) B_h
 	\end{align*}
 		where 
 		$\Gamma_d(\theta) := (E_h+ A_h)\tan(0.5\theta) -\rj (E_h-  A_h) $.
 \end{theorem}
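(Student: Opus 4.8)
The plan is to package the causal factor $\Gf$ together with the anticausal factor $\Gf^\sim$ into the single regular pencil $(A_h,E_h)$ and then reduce everything to \thmref{dtn:thm:main}. First I would establish a descriptor realization of the integrand. On the unit circle $\Gf^\sim(\ejth)=B^T(\e^{-\jth}I-A^T)^{-1}C^T$, and the matrix $E_h-\e^{-\jth}A_h$ is block lower triangular with diagonal blocks $I-\e^{-\jth}A$ and $A^T-\e^{-\jth}I$; inverting it in closed form and sandwiching by $C_h$ and $B_h$ yields $\Gf^\sim(\ejth)\Gf(\ejth)=C_h(E_h-\e^{-\jth}A_h)^{-1}B_h$. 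The appearance of the reflected variable $\e^{-\jth}$ rather than $\ejth$ is intrinsic: $\Gf^\sim$ is anticausal, with poles at the reciprocals of $\sigma(A)$, which is precisely why the augmented pencil carries the reciprocal spectrum.

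Next I would bring this to the standard form of \thmref{dtn:thm:main}. Writing $(E_h-\e^{-\jth}A_h)^{-1}=-(\e^{-\jth}A_h-E_h)^{-1}$ and substituting $\phi=-\theta$ turns the target into $-C_h\big(\int_{-\theta_2}^{-\theta_1}(\e^{\rj\phi}A_h-E_h)^{-1}\,d\phi\big)B_h$, which is exactly the descriptor integral of \thmref{dtn:thm:main} for the pencil $(E_h,A_h)$ (that is, $E_h$ in the role of $A$ and $A_h$ in the role of $E$) over the reflected interval $[-\theta_2,-\theta_1]$. Here I would check the hypotheses: $\alpha E_h+\beta A_h$ is block triangular with diagonal blocks $\alpha I+\beta A$ and $\alpha A^T+\beta I$, hence invertible for generic $\alpha,\beta$, so the pencil is regular; and $\det(A_h-\lambda E_h)=\det(A-\lambda I)\det(I-\lambda A^T)$ shows the finite eigenvalues are $\sigma(A)$ together with the reciprocals of its nonzero members. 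A unit-modulus point $\e^{\rj\psi}$ is thus an eigenvalue iff $\e^{\rj\psi}\in\sigma(A)$ or $\e^{-\rj\psi}\in\sigma(A)$, and demanding that the reflected arc avoid these is exactly the stated condition $\wrap(\phi)\in[\theta_1,\theta_2]\cup[-\theta_1,-\theta_2]$.

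With the reduction in place I would apply \eqref{dtn:main2} to the pencil $(E_h,A_h)$ and collapse the result using three structural facts. Because $B_h$ occupies only the block on which $E_h$ acts as the identity, one has $E_h^{-1}B_h=B_h$ and $C_hE_h^{-1}B_h=C_hB_h=0$; the first absorbs the $A_\eps^{-1}$ prefactor of \eqref{dtn:main2}, while the second annihilates the constant $-\eta I$ term. Finally, evaluating the matrix $\Gamma$ of \thmref{dtn:thm:main} for this pencil at $-\theta$ returns $-\Gamma_d(\theta)$, so the two sign flips cancel inside the resolvent product and, together with $\log(M^{-1})=-\log M$ for the principal logarithm (valid since the hypothesis keeps its argument off $\cnR$), convert the reflected $\log$ into $\log(\Gamma_d(\theta_1)^{-1}\Gamma_d(\theta_2))$ with the right orientation. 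Collecting the signs gives the claimed $\frac{1}{\rj}C_h\log(\Gamma_d(\theta_1)^{-1}\Gamma_d(\theta_2))B_h$.

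The hard part will be the $\eps\to0$ limit when $A$, and hence both $E_h$ and $A_h$, is singular, since then one cannot simply set $\eps=0$ in $A_\eps^{-1}$. The rescue is once more the block placement of $B_h$ and $C_h$: as $B_h$ lives in the nonsingular identity block of $E_h$, the vector $A_\eps^{-1}B_h=(E_h+\alpha\eps I)^{-1}B_h$ remains bounded and converges to $B_h$, while $C_hE_h^{-1}B_h=0$ removes the only genuinely singular contribution, the one weighted by $\eta$. Making these cancellations rigorous—bounding the sandwiched expression uniformly in $\eps$ and tracking the sign bookkeeping through $\Gamma(-\theta)=-\Gamma_d(\theta)$—is what demands the most care; once settled, the identity follows with no inversion of $A$ appearing anywhere.
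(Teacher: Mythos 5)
Your proposal is correct and follows essentially the same route as the paper's own proof: the same descriptor realization $\Gf^\sim(\ejth)\Gf(\ejth)=C_h(E_h-\e^{-\jth}A_h)^{-1}B_h$, the same reflection $\theta\mapsto-\theta$ reducing the integral to \thmref{dtn:thm:main} applied with the roles of the pencil matrices swapped to $(E_h,A_h)$, and the same structural cancellations $\lim_{\eps\to 0}(E_h+\alpha\eps I)^{-1}B_h=B_h$ and $C_hB_h=0$ to absorb the $A_\eps^{-1}$ prefactor, annihilate the $\eta$ term, and justify the $\eps\to 0$ limit for singular $A$. If anything, you make explicit the sign bookkeeping ($\Gamma(-\theta)=-\Gamma_d(\theta)$ and $\log(M^{-1})=-\log M$ for the principal logarithm) that the paper leaves implicit.
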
 
 \begin{proof}
 	The system 
 	$\Gf^\sim(z) \Gf(z)$ can be expressed as
 	 	$\Gf^\sim(z) \Gf(z)=  z C_h (zE_h - A_h)^{-1} B_h$.  	 	Clearly,
 	\begin{align*}
 	\int_{\theta_1}^{\theta_2} \Gf^\sim(\ejth) \Gf(\ejth) \dth
 	&= \int_{\theta_1}^{\theta_2} C_h \left(E_h - \emjth A_h \right)^{-1} B_h  \dth \\
 	&=  \int_{-\theta_1}^{-\theta_2} C_h \left( \ejth A_h - E_h\right)^{-1} B_h  \dth  	\end{align*}
  Assume that $\lambda_{max}$ and $\lambda_{min}$ represents the maximum and minimum absolute values of eigenvalues of 
 $A$. Now, $\det(E_h-\mu A_n)= \det(I- \mu A)\det (A^T -\mu I) \ne 0$
 if $\mu$ and $1/\mu$ is not an eigenvalue of $A$.
  Hence, the matrix pencil	$(E_h,A_h)$ is regular.
 It also shows that the eigenvalues of  $(E_h,A_h)$ are eigenvalues of $A$ and $A^{-1}$. This means,
  for any $\phi \in [\theta_1,\theta_2]$,  
 if $\e^{\rj \phi}$ is not an  eigenvalue of $(A_h,E_h)$  then   
 $\e^{-\rj \phi}$ not an  eigenvalue of $(A_h,E_h)$. This implies 
 $\e^{\rj \phi}$ not an eigenvalue of $A$ for any	$\phi \in [\theta_1,\theta_2]\cup[-\theta_1,-\theta_2]$.
Note that  $\lim_{\eps \to 0}(E_h + \eps \alpha I)^{-1}B_h = B_h$ and $C_hB_h=0$.
Finally, 
equivalence to the limit needed in \eqref{dtn:main2} can be proved in a manner given in the proof of 
\thmref{dtn:thm:main}.
\end{proof}
Note that the above result does not need any limits.

\section{Computation}
\label{dtn:sec:comput}

Equation \eqref{dtn:main} can be converted into another form (independent of $\eps$) which uses 
exponential of matrices  and 
\begin{align}\label{dtn:psi1}
\psi_1(A):= \sum_{j=0}^{\infty} \frac{1}{(j+1)!} A^j  \end{align}
The advantage is that both of these functions have numerically accurate and reliable implementation \cite[\S 10.5, \S 10.7.4]{higham}.

\begin{theorem} \label{dtn:thm:exp}
	Let $\psi_1$ be as in \eqref{dtn:psi1}.
	Using notations and conditions of \thmref{dtn:thm:main}, we have that
	$\int_{\theta_1}^{\theta_2} {K}(\ejth)\dth$ equals
		\begin{align}\label{dtn:exp}
		\frac{1}{\rj} W^{-1}\left(-\alpha L (\e^{-\rj \theta_2} I -\e^{-\rj \theta_1} \e^Y)  + \beta Y  \right)
		\end{align}
    where $Y_\eps :=   -\eta I + \log\left(\Gamma(\theta_1,\eps)^{-1}\Gamma(\theta_2,\eps)\right)$,
		$Y := \lim_{\eps \to 0} Y_\eps = -\eta I + \log\left(\Gamma(\theta_2,0)\Gamma(\theta_1,0)^{-1}\right)$ 		and $ L :=\lim_{\eps \to 0} (\e^{Y_\eps}-I)^{-1} Y_\eps = \psi_1(Y)^{-1}$.
\end{theorem}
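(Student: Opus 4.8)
The plan is to start from \eqref{dtn:main1} rather than \eqref{dtn:main2}, because it already carries the inverse $A_\eps^{-1}$ on the left and uses the ordering $\Gamma(\theta_2,\eps)\Gamma(\theta_1,\eps)^{-1}$ that matches the stated $Y$ (the ordering of the two $\Gamma$ factors is immaterial for the spectrally defined $L=\psi_1(Y)^{-1}$). Writing $Y_\eps := -\eta I + \log\!\big(\Gamma(\theta_2,\eps)\Gamma(\theta_1,\eps)^{-1}\big)$ for the matrix appearing there, I would first absorb $A_\eps^{-1}$ into $W^{-1}$. The identity that makes this work is $W=\alpha E_\eps+\beta A_\eps$ for \emph{every} $\eps$ (the shifts $+\alpha\eps I$ and $-\beta\eps I$ cancel), so that $A_\eps^{-1}=W^{-1}(\alpha E_\eps A_\eps^{-1}+\beta I)$ and hence
\[
\int_{\theta_1}^{\theta_2}{K}(\ejth)\dth=\tfrac1\rj W^{-1}\lim_{\eps\to0}\big(\alpha E_\eps A_\eps^{-1}Y_\eps+\beta Y_\eps\big).
\]
The $\beta$ term is immediate because $Y_\eps\to Y$, contributing $\beta Y$; the entire difficulty sits in the $\alpha$ term, where $A_\eps^{-1}$ (equivalently $Q_\eps^{-1}$ with $Q_\eps:=A_\eps W^{-1}$) diverges whenever $A$ is singular.

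To tame the $\alpha$ term I would use the identity $Y_\eps=\psi_1(Y_\eps)^{-1}(\e^{Y_\eps}-I)$. Its purpose is exactly that $\psi_1(Y_\eps)$ stays invertible as $\eps\to0$ even when $Y_\eps$ becomes singular, since $(\e^{x}-1)/x$ has a removable singularity and does not vanish at $0$; this is the device that rescues the limit. A preliminary observation is that $E_\eps A_\eps^{-1}=\tfrac1\alpha(Q_\eps^{-1}-\beta I)$, that each $\Gamma(\theta_i,\eps)W^{-1}$ is affine in $Q_\eps$, and therefore that $E_\eps A_\eps^{-1}$, $\e^{Y_\eps}$, $\psi_1(Y_\eps)^{-1}$ and $E_\eps\Gamma(\theta_1,\eps)^{-1}$ are all primary matrix functions of the single matrix $Q_\eps$ and so commute; this frees me to reorder $\psi_1(Y_\eps)^{-1}\,(\to L)$ at will. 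Writing $t_i:=\tan(0.5\theta_i)$ and using $\Gamma(\theta,\eps)=(t-\rj)E_\eps+(t+\rj)A_\eps$ together with the scalar value $\e^{-\eta}=\frac{t_1-\rj}{t_2-\rj}$, a one-line simplification of $\e^{Y_\eps}-I=\e^{-\eta}\Gamma(\theta_2,\eps)\Gamma(\theta_1,\eps)^{-1}-I$ collapses the $E_\eps$ part and leaves
\[
\e^{Y_\eps}-I=\frac{-2\rj(t_2-t_1)}{t_2-\rj}\,A_\eps\,\Gamma(\theta_1,\eps)^{-1}.
\]
This is the crucial cancellation: the explicit factor $A_\eps$ meets $A_\eps^{-1}$, so $E_\eps A_\eps^{-1}(\e^{Y_\eps}-I)=\frac{-2\rj(t_2-t_1)}{t_2-\rj}E_\eps\Gamma(\theta_1,\eps)^{-1}$ is bounded and has the finite limit $\frac{-2\rj(t_2-t_1)}{t_2-\rj}E\,\Gamma(\theta_1,0)^{-1}$.

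It then remains to identify this limit with the claimed closed form. Substituting $\e^{-\rj\theta}=-\frac{t+\rj}{t-\rj}$, $\e^{-\eta}=\frac{t_1-\rj}{t_2-\rj}$ and $\e^{Y}=\e^{-\eta}\Gamma(\theta_2,0)\Gamma(\theta_1,0)^{-1}$, I would expand $-(\e^{-\rj\theta_2}I-\e^{-\rj\theta_1}\e^{Y})$ into $\frac{1}{t_2-\rj}\big[(t_2+\rj)\Gamma(\theta_1,0)-(t_1+\rj)\Gamma(\theta_2,0)\big]\Gamma(\theta_1,0)^{-1}$, and then invoke the linear relation $(t_1+\rj)\Gamma(\theta_2,0)-(t_2+\rj)\Gamma(\theta_1,0)=2\rj(t_2-t_1)E$ (obtained by eliminating $A$ from the two definitions of $\Gamma$) to see that the bracket equals $-2\rj(t_2-t_1)E$, matching the previous display exactly. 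Restoring the commuting factor $\psi_1(Y_\eps)^{-1}\to L=\psi_1(Y)^{-1}$ yields $\lim_{\eps\to0}\alpha E_\eps A_\eps^{-1}Y_\eps=-\alpha L(\e^{-\rj\theta_2}I-\e^{-\rj\theta_1}\e^{Y})$, and adding the $\beta Y$ term gives \eqref{dtn:exp}.

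I expect the main obstacle to be precisely this $\alpha$-term limit: the nontrivial insight is that the logarithmic matrix $Y_\eps$ hides a factor of $A_\eps$ inside $\e^{Y_\eps}-I$ which cancels the divergent $A_\eps^{-1}$, and that the scalar $\psi_1$-factor is the correct gadget to keep the expression finite and $\eps$-free. The supporting technicalities are routine but must be recorded: interchanging the limit with the commuting rearrangements needs $\Gamma(\theta_i,0)$ invertible (guaranteed because $\e^{\rj\theta_i}$ is not an eigenvalue of $(A,E)$) and $\psi_1(Y_\eps)$ invertible for small $\eps$ (its spectrum avoids the nonzero multiples of $2\pi\rj$, again by the eigenvalue hypothesis, so $L=\psi_1(Y)^{-1}$ is well defined). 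Finally the degenerate case $\alpha=0$ needs no regularisation: the $L$-term carries the factor $\alpha$ and vanishes, while $W=\beta A$ reduces $\tfrac1\rj W^{-1}\beta Y$ to $\tfrac1\rj A^{-1}Y$, which is \eqref{dtn:main1} itself.
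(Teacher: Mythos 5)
Your proposal is correct and follows essentially the same route as the paper: both arguments exponentiate $Y_\eps$, use the half-angle identity $\Gamma(\theta,\eps)=(t-\rj)(E_\eps-\emjth A_\eps)$ to expose an explicit $A_\eps$ factor inside $\e^{Y_\eps}-I$ that cancels the divergent $A_\eps^{-1}$, pass to the limit through $L=\psi_1(Y)^{-1}$ using commutativity of primary functions of $Q_\eps$, and dispose of $\alpha=0$ separately via $W=\beta A$. The only difference is bookkeeping --- you split $A_\eps^{-1}=W^{-1}\left(\alpha E_\eps A_\eps^{-1}+\beta I\right)$ and compute $E_\eps A_\eps^{-1}(\e^{Y_\eps}-I)$ explicitly, whereas the paper works with $Q_\eps^{-1}Y_\eps$ via the relation $I-\e^{Y_\eps}=\left(\alpha M_\eps+(I-\e^{Y_\eps})\beta\right)Q_\eps$ --- which is the same algebra rearranged.
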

\begin{proof} 
$Y =\lim_{\eps \to 0} Y_\eps$ due to \thmref{dtn:thm:fd}.

Note that
\begin{align}\label{dtn:tth}
(E_\eps + A_\eps  )t -\rj (E_\eps -A_\eps  ) = (t-\rj)(E_\eps - \emjth A_\eps  )
\end{align}
where $t:= \tan(\frac{\theta}{2})$. Now,
\begin{align*}
\exp\left( -\eta I + \log \left(\Gamma(\theta_2,\eps)\Gamma(\theta_1,\eps)^{-1})
\right) \right) = \e^{Y_\eps}.
\end{align*}
Using \cite[Theorem 10.2,Theorem 1.17]{higham}, we have
\begin{align*}
\frac{t_1-\rj}{t_2-\rj}\left( \Gamma(\theta_2,\eps)\Gamma(\theta_1,\eps)^{-1})
\right) =\e^{Y_\eps}
\end{align*}
where $t_i:= \tan(\frac{\theta_i}{2})$.
Assume $\alpha \ne 0$. 
Using \eqref{dtn:tth}, we have
\begin{align*}
\alpha (1-\e^{Y_\eps})E_\eps W^{-1} &=  \alpha(\e^{-\rj \theta_2} I - \e^{-\rj \theta_1} \e^{Y_\eps} )A_\eps W^{-1} \\
(1-\e^{Y_\eps})\alpha (E -\beta \eps I)W^{-1}  &= \alpha M  A_\eps  W^{-1} 
\end{align*}
where $W :=\alpha E + \beta A $ and $M_\eps :=\e^{-\rj \theta_2} I - \e^{-\rj \theta_1} \e^{Y_\eps} $.
Further, simplifying using $\alpha EW^{-1} = I-\beta A W^{-1}$, $Q_\eps := A_\eps W^{-1} =  AW^{-1} + \alpha \eps W^{-1}$
and $\alpha E_\eps W^{-1}:= I-\beta Q_\eps$, we have
$I-\e^{Y_\eps} =  (\alpha M_\eps  +  (I-\e^{Y_\eps})\beta ) Q_\eps$.
Using \cite[Theorem 1.13.a]{higham} and
\begin{align*}
\int_{\theta_1}^{\theta_2}K(\ejth) d\theta 
= \lim_{\eps \to 0}	\frac{1}{ \rj} W^{-1} Q_\eps^{-1} Y_\eps,
\end{align*}
we have the result.
Equivalence of $L$ and $\psi$ is standard \cite[\S 10.7.4]{higham}. 
Note that $L$ is invertible because it has no zero eigenvalues. 

If $\alpha =0$ then $A$ is invertible and $\beta \ne 0$ by the regularity of $(A,E)$. Hence,
$	\int_{\theta_1}^{\theta_2} {K}(\ejth)\dth
= \frac{1}{\rj} A^{-1} Y $. 
\end{proof}

Equation \eqref{dtn:inf} can be also modified in a manner similar to \thmref{dtn:thm:exp}.
\begin{corollary} \label{dtn:coro:exp}
	Let $\psi_1$ be as in \eqref{dtn:psi1}.
	Using notations and conditions of \cororef{dtn:coro:inf}, we have that
	\begin{align*}
	\int_{\theta_1}^{\pi} {K}(\ejth)\dth
	&= \frac{1}{\rj} W^{-1} L_f \left(\e^Y(\alpha  -\beta) + (\e^{-\rj \theta_1}\alpha + \beta)I \right) 
	\end{align*}
	where $Y_\eps :=  -\eta_f I + \log\left(t I -\rj\Phi_{-}(\eps)\Phi_{+}(\eps)^{-1}\right)$,
		$Y := \lim_{\eps \to 0} Y_\eps = -\eta_f I + \log\left(t I -\rj\Phi_{-}(0)\Phi_{+}(0)^{-1}\right)$ 		and $ L_f :=\lim_{\eps \to 0} (\e^{Y_\eps}-I)^{-1} Y_\eps = \psi_1(Y)$.
\end{corollary}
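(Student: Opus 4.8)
The plan is to regard this as the $\theta_2\to\pi$ endpoint version of \thmref{dtn:thm:exp} and to re-use that proof almost verbatim, replacing the upper-limit data by their limits as $t_2:=\tan(\theta_2/2)\to\infty$. The natural starting point is \cororef{dtn:coro:inf}, which already writes $\int_{\theta_1}^{\pi}K(\ejth)\dth=\lim_{\eps\to 0}\frac{1}{\rj}A_\eps^{-1}Y_\eps$ with the $\Phi_{\pm}$-logarithm $Y_\eps$ of the corollary; here $\Phi_{+}(\eps)=E_\eps+A_\eps$ is exactly what $\Gamma(\theta_2,\eps)/(t_2-\rj)$ converges to, while $\e^{-\rj\theta_2}\to\e^{-\rj\pi}=-1$. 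The whole task is then to turn this $\eps$-dependent, $A_\eps^{-1}$-prefixed object into the single limit-free expression $\frac{1}{\rj}W^{-1}L_f(\cdots)$ assembled from $\e^{Y}$ and $\psi_1$ as in \eqref{dtn:psi1}.

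Concretely I would reproduce the three computational moves of \thmref{dtn:thm:exp}. First, exponentiate: by \eqref{dtn:tth} together with \cite[Theorem 10.2, Theorem 1.17]{higham} one gets $\e^{Y_\eps}=(E_\eps-\e^{-\rj\theta_2}A_\eps)(E_\eps-\e^{-\rj\theta_1}A_\eps)^{-1}$, which at the endpoint becomes $\e^{Y_\eps}=\Phi_{+}(\eps)(E_\eps-\e^{-\rj\theta_1}A_\eps)^{-1}$ since $\e^{-\rj\pi}=-1$. Second, feed this into the pencil identity of that proof: multiplying on the right by $E_\eps W^{-1}$ and $A_\eps W^{-1}$ and using $\alpha E_\eps W^{-1}=I-\beta Q_\eps$ with $Q_\eps:=A_\eps W^{-1}$ yields a relation of the shape $I-\e^{Y_\eps}=(\alpha M_\eps+\beta(I-\e^{Y_\eps}))Q_\eps$, where now $M_\eps=-I-\e^{-\rj\theta_1}\e^{Y_\eps}$ because the upper endpoint contributes $-1$. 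Third, solve for $Q_\eps^{-1}Y_\eps$ — legitimate because $M_\eps$, $\e^{Y_\eps}$ and $Y_\eps$ are all functions of $Y_\eps$ and therefore commute — and let $\eps\to0$, using the standard relation between $(\e^{Y_\eps}-I)^{-1}Y_\eps$ and $\psi_1$ (\cite[\S 10.7.4]{higham}) that defines $L_f$. A final regrouping, substituting $\beta Y$ through the identity linking $L_f$, $Y$ and $\e^{Y}$, lets me pull $L_f$ to the left and read off the two scalar coefficients attached to $\e^{Y}$ and $I$, which is the claimed form.

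The step I expect to be the real obstacle is the passage $\theta_2\to\pi$ itself: the ingredients $\Gamma(\theta_2,\eps)$, $\eta$ and $t_2$ each diverge, and only the principal logarithm of their ratio, namely $Y$, stays finite. This is exactly why I would route the argument through \cororef{dtn:coro:inf} and \cite[Lemma 5(2)]{hstn} instead of naively setting $\theta_2=\pi$ in \eqref{dtn:exp}. Two side conditions also need attention. The branch-cut hypothesis must be preserved throughout: because $\e^{\rj\phi}$ is not an eigenvalue of $(A,E)$ for any $\wrap(\phi)\in[\theta_1,\pi]$, the argument of every principal logarithm stays off $\cnR$ for all sufficiently small $|\eps|$, just as in the proof of \thmref{dtn:thm:fd} invoked by \cororef{dtn:coro:inf}. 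And the interchange of the $\eps\to0$ limit with the $\psi_1$-substitution and the factorisation must be justified; this rests on the existence of the Fr\'{e}chet derivative guaranteed by \thmref{dtn:thm:fd} and \cite[Theorem 3.8]{higham}, which both secures $Y=\lim_{\eps\to0}Y_\eps$ and supplies the regularity needed to pass the limit inside. Finally, the degenerate $\alpha=0$ case is handled exactly as in \thmref{dtn:thm:exp}: then $A$ is invertible, one may take $\eps=0$ outright, and no limit is needed.
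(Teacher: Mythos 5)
Your overall strategy is exactly the one the paper intends: it gives no explicit proof of this corollary, saying only that \eqref{dtn:inf} ``can be also modified in a manner similar to \thmref{dtn:thm:exp}'', and your three moves (exponentiate via \eqref{dtn:tth}, derive the pencil identity with $Q_\eps:=A_\eps W^{-1}$, solve for $Q_\eps^{-1}Y_\eps$ and pass $\eps\to0$ through the Fr\'{e}chet-derivative argument), together with your handling of the degenerations $\Gamma(\theta_2,\eps)/(t_2-\rj)\to\Phi_{+}(\eps)$, $\e^{-\rj\theta_2}\to-1$, the branch-cut hypothesis, and the $\alpha=0$ case, all match that template. However, there is a concrete orientation slip at your very first step. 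In \cororef{dtn:coro:inf} the integral equals $\lim_{\eps\to0}\frac{1}{\rj}A_\eps^{-1}\bigl(\eta_f I-\log(t_1 I-\rj\Phi_{-}(\eps)\Phi_{+}(\eps)^{-1})\bigr)$, which with the $Y_\eps$ of the present statement is $-\lim_{\eps\to0}\frac{1}{\rj}A_\eps^{-1}Y_\eps$, not $+\lim_{\eps\to0}\frac{1}{\rj}A_\eps^{-1}Y_\eps$ as you assert. Correspondingly, since $t_1 I-\rj\Phi_{-}(\eps)\Phi_{+}(\eps)^{-1}=\Gamma(\theta_1,\eps)\Phi_{+}(\eps)^{-1}=(t_1-\rj)(E_\eps-\e^{-\rj\theta_1}A_\eps)\Phi_{+}(\eps)^{-1}$ by \eqref{dtn:tth}, exponentiating the statement's $Y_\eps$ gives $\e^{Y_\eps}=(E_\eps-\e^{-\rj\theta_1}A_\eps)\Phi_{+}(\eps)^{-1}$ --- the \emph{inverse} of your claimed $\Phi_{+}(\eps)(E_\eps-\e^{-\rj\theta_1}A_\eps)^{-1}$, which is instead the $\theta_2\to\pi$ limit of the $Y_\eps$ of \thmref{dtn:thm:exp} and equals $-Y_\eps$ here.

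Because your two slips are consistent (you effectively work with $-Y_\eps$ throughout), your pipeline with $M_\eps=-I-\e^{-\rj\theta_1}\e^{Y_\eps}$ produces $\frac{1}{\rj}W^{-1}L\bigl((\alpha-\beta)I+(\e^{-\rj\theta_1}\alpha+\beta)\e^{Y'}\bigr)$ with $Y'=-Y$: reading off the coefficients attached to $I$ and $\e^{Y}$ as you propose, you would find them \emph{interchanged} relative to the statement and would appear to refute the corollary rather than prove it. The value is in fact the same, because $(\e^{-Y}-I)^{-1}(-Y)=\e^{Y}(\e^{Y}-I)^{-1}Y$ and all factors are functions of $Y$ and commute, which exactly swaps the two coefficients --- but that reconciliation step is absent from your plan and is precisely what must be supplied. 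The cleaner fix is to use the correct orientation from the start: from $\e^{Y_\eps}(E_\eps+A_\eps)=E_\eps-\e^{-\rj\theta_1}A_\eps$, right-multiplying by $\alpha W^{-1}$ and using $\alpha E_\eps W^{-1}=I-\beta Q_\eps$ yields
\begin{align*}
I-\e^{Y_\eps}=\left(\e^{Y_\eps}(\alpha-\beta)+(\e^{-\rj\theta_1}\alpha+\beta)I\right)Q_\eps,
\end{align*}
and $\int_{\theta_1}^{\pi}K(\ejth)\dth=-\lim_{\eps\to0}\frac{1}{\rj}W^{-1}Q_\eps^{-1}Y_\eps$ then gives the stated expression directly with $L_f=\lim_{\eps\to0}(\e^{Y_\eps}-I)^{-1}Y_\eps$. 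Note finally that this limit equals $\psi_1(Y)^{-1}$, not $\psi_1(Y)$ (just as $L=\psi_1(Y)^{-1}$ in \thmref{dtn:thm:exp}); the ``$=\psi_1(Y)$'' in the statement is an internal inconsistency of the paper, and your appeal to the ``standard relation'' of \cite[\S 10.7.4]{higham} in fact delivers the inverse, so you should state this explicitly rather than leave it implicit.
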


\section{A brief note on the continuous time system}\label{dtn:sec:cont}
Similar to \thmref{dtn:thm:main} and \thmref{dtn:thm:exp}, the results of \cite{hstn} can be extended to the continuous time descriptor systems.
These results are useful in the model reduction applications \cite{imran}.
The proof is similar to \thmref{dtn:thm:main} and \thmref{dtn:thm:exp}.
\begin{theorem} \label{dtn:thm:mainc}
	Let $\psi_1$ be as in \eqref{dtn:psi1}.
	Assume that a continuous time system $K$ can be represented in the descriptor form as 
	$K(s)=(sE-A)^{-1}$ with real matrices $A$ and $E$. Also, assume that $\omega_1,\omega_2 \in \mb{R}$ and
	the matrix pencil $(A,E)$ is regular i.e.\ $W:=\alpha E + \beta A$ is invertible for at-least one set of  complex numbers
	$\alpha$ and $\beta$.
	If the matrix pencil $(A,E)$ has no imaginary eigenvalue $\rj\lambda$
	 with $\lambda \in [\omega_1 , \omega_2 ]$  then
	\begin{align*}
		\int_{\omega_1}^{\omega_2} {K}(\jw)\dw 
		&= \lim_{\eps\to 0}\frac{1}{ \rj}E_\eps ^{-1}
		\log\left(\tilde{\Omega}(\omega_2,\eps)\tilde{\Omega}(\omega_1,\eps)^{-1}\right) \\
		&= \lim_{\eps\to 0}\frac{1}{ \rj}  
		\log\left(\tilde{\Omega}(\omega_1,\eps)^{-1}\tilde{\Omega}(\omega_2,\eps) \right) E_\eps ^{-1} \\
		&= - W^{-1}\left(\beta \tilde{L} (\omega_2 I - \e^{\tilde{Y}}\omega_1) + \rj \alpha \tilde{Y}\right)
		\end{align*}
	where 	$A_\eps :=A+ \alpha\eps I$, $E_\eps:=E-\beta\eps I$, $\eps\in\mb{C}$,
	$\tilde{\Omega}(\omega,\eps):=\omega E_\eps +\rj A_\eps$,
	$ \tilde{Y}_\eps:= \log\left(\tilde{\Omega}(\omega_2,\eps)\tilde{\Omega}(\omega_1,\eps)^{-1}\right)$
	$ \tilde{Y}:= \lim_{\eps\to 0} \tilde{Y}_\eps = \log\left(\tilde{\Omega}(\omega_2,0)\tilde{\Omega}(\omega_1,0)^{-1}\right)$ 
	and $\tilde{L}:= \lim_{\eps \to 0} (\e^{\tilde{Y}_\eps}-1)^{-1} \tilde{Y}_\eps= \psi_1(\tilde{Y})^{-1}$.
\end{theorem}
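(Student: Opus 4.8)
The plan is to mirror the proofs of \thmref{dtn:thm:main} and \thmref{dtn:thm:exp}, exploiting that the continuous-time setting is in fact simpler: the map $\omega\mapsto \jw E-A$ is affine in $\omega$ and no half-angle substitution (hence no analogue of $\eta$) is needed. First I would record the two algebraic facts that drive everything. The regularisation is designed so that $\alpha E_\eps+\beta A_\eps=W$ for \emph{every} $\eps$ (the $\eps$-terms cancel), and, since $\jw E-A=\rj(\omega E+\rj A)=\rj\,\tilde{\Omega}(\omega,0)$, the integrand is $K(\jw)=\tfrac1\rj\tilde{\Omega}(\omega,0)^{-1}$. Assuming $\alpha\ne 0$, I would reduce to a single matrix by writing $\tilde{\Omega}(\omega,\eps)W^{-1}=\tfrac{\omega}{\alpha}I+(\rj-\tfrac{\omega\beta}{\alpha})Q_\eps$ with $Q_\eps:=A_\eps W^{-1}$, so that $\tilde{\Omega}(\omega_2,\eps)\tilde{\Omega}(\omega_1,\eps)^{-1}$, its principal logarithm $\tilde{Y}_\eps$, and $\e^{\tilde{Y}_\eps}$ are all functions of the single matrix $Q_\eps$ and therefore commute.

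Next I would construct the antiderivative exactly as in \thmref{dtn:thm:main}: define the scalar function whose $\omega$-derivative is the per-eigenvalue integrand $\tfrac1\rj[\tfrac{\omega}{\alpha}+(\rj-\tfrac{\omega\beta}{\alpha})z]^{-1}$, namely a normalised logarithm of a ratio (with the lower limit baked in, like $f_d$) on the nonzero part of the spectrum together with the obvious linear piece at $z=0$. I would show it is \emph{defined} on $\sigma(Q_\eps)$ precisely when $(A,E)$ has no eigenvalue $\rj\lambda$ with $\lambda\in[\omega_1,\omega_2]$ — this is where the hypothesis enters, since $\rj\omega\in\sigma(A,E)$ is exactly where $\tilde{\Omega}(\omega,0)$ becomes singular — using a continuous-time analogue of \thmref{dtn:thm:fd} and \thmref{dtn:thm:fi}. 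Invoking \cite[Theorem 6.2.27]{horn} then yields the matrix antiderivative $W^{-1}\tilde{g}(Q_\eps,\omega)$, and the prefactor identity $\alpha W^{-1}(I-\beta Q_\eps)^{-1}=E_\eps^{-1}$ (which follows from $\alpha E_\eps W^{-1}=I-\beta Q_\eps$) converts it into the first claimed form $\tfrac1\rj E_\eps^{-1}\log(\tilde{\Omega}(\omega_2,\eps)\tilde{\Omega}(\omega_1,\eps)^{-1})$ after using the principal-logarithm subtraction rule (\cite[Theorem 11.3, Theorem 11.2]{higham}, as in \cororef{dtn:coro:inf}). The second form follows from the commutation identity \cite[Theorem 1.13.c]{higham}, and the passage $\eps\to 0$ when $E$ is singular is justified, as in \thmref{dtn:thm:main}, by existence of the Fr\'{e}chet derivative (\cite[Theorem 3.8]{higham}) together with the expansion $\tilde{g}(Q_\eps)=\tilde{g}(Q)+O(\eps)$.

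To obtain the third, $\eps$-free, expression I would follow \thmref{dtn:thm:exp}. The crucial simplification is affineness: $\tilde{\Omega}(\omega_2,\eps)-\tilde{\Omega}(\omega_1,\eps)=(\omega_2-\omega_1)E_\eps$, and since $\e^{\tilde{Y}_\eps}=\tilde{\Omega}(\omega_2,\eps)\tilde{\Omega}(\omega_1,\eps)^{-1}$ this gives the key relation $(\e^{\tilde{Y}_\eps}-I)\tilde{\Omega}(\omega_1,\eps)=(\omega_2-\omega_1)E_\eps$. Substituting this into $\tfrac1\rj E_\eps^{-1}\tilde{Y}_\eps$ and using the single-matrix reduction (so every factor commutes) turns the integral, after a short computation, into $-W^{-1}[\beta(\omega_2-\omega_1)(\e^{\tilde{Y}_\eps}-I)^{-1}\tilde{Y}_\eps-(\beta\omega_1-\rj\alpha)\tilde{Y}_\eps]$. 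Passing to the limit, $(\e^{\tilde{Y}_\eps}-I)^{-1}\tilde{Y}_\eps\to\psi_1(\tilde{Y})^{-1}=\tilde{L}$ (standard, \cite[\S 10.7.4]{higham}), and the elementary identity $\tilde{L}\,\e^{\tilde{Y}}=\tilde{L}+\tilde{Y}$ — immediate from $\tilde{L}=(\e^{\tilde{Y}}-I)^{-1}\tilde{Y}$ — rewrites the bracket as $\beta\tilde{L}(\omega_2 I-\e^{\tilde{Y}}\omega_1)+\rj\alpha\tilde{Y}$, which is the asserted form. The degenerate sub-case $\beta=0$ (so $E$ invertible) is handled directly with $\eps=0$, giving $\tfrac1\rj E^{-1}\tilde{Y}$, and one checks this equals the same expression with $\beta=0$; the sub-case $\alpha=0$ (so $A$ invertible) is the separate branch treated exactly as in \thmref{dtn:thm:main}.

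I expect the main obstacle to be the same two delicate points as in the discrete proof. First, verifying that the principal logarithm is defined \emph{along the whole integration path}, i.e.\ that the eigenvalues of the reduced matrix argument never cross the closed negative real axis $\cnR$; this is where the imaginary-eigenvalue hypothesis must be promoted to a genuine spectral statement via the continuous analogue of \thmref{dtn:thm:fd}/\thmref{dtn:thm:fi}, and it is the step that controls which branch of the antiderivative is admissible. Second, rigorously justifying the $\eps\to 0$ limit through the Fr\'{e}chet-derivative expansion rather than by naive substitution, since $E_\eps^{-1}$ (equivalently $(I-\beta Q_\eps)^{-1}$) diverges when $E$ is singular. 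Everything else is the routine affine and commuting-function algebra sketched above.
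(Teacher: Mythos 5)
Your proposal is correct and follows essentially the same route the paper intends for \thmref{dtn:thm:mainc}, whose proof the paper gives only by reference to \thmref{dtn:thm:main} and \thmref{dtn:thm:exp}: the single-matrix reduction via $Q_\eps := A_\eps W^{-1}$ exploiting $\alpha E_\eps + \beta A_\eps = W$ for all $\eps$, the branch control of the principal logarithm through the imaginary-eigenvalue hypothesis (the continuous analogue of \thmref{dtn:thm:fd}, which is indeed simpler since no half-angle substitution or $\eta$-normalisation is needed), the Fr\'{e}chet-derivative justification of the $\eps\to 0$ limit, and the affine identity $(\e^{\tilde{Y}_\eps}-I)\tilde{\Omega}(\omega_1,\eps)=(\omega_2-\omega_1)E_\eps$ producing the $\eps$-free formula. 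Your algebra checks out throughout, including the intermediate bracket, which via $\tilde{L}\e^{\tilde{Y}}=\tilde{L}+\tilde{Y}$ simplifies exactly to $\beta\tilde{L}(\omega_2 I-\e^{\tilde{Y}}\omega_1)+\rj\alpha\tilde{Y}$ as stated.
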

\section{Conclusions}
Computation of the  frequency truncated discrete-time system norm arises in 
different signal processing and model reduction applications.
This paper contains expressions for integral of the transfer function of a discrete-time system given in the descriptor form.
The result for the descriptor system is used in obtaining the  frequency truncated norm of a discrete-time system in the general case. Simplified results in case of stable systems are also given in the paper.
Similar results for the continuous time systems given in the descriptor form,  are also mentioned briefly. 
 
\section*{Acknowledgements}                              
	The author would like to thank Prof.\ Nicholas J.\ Higham (The University of Manchester, UK)
	and Prof. R.\ Alam (Indian Institute of Technology Guwahati, India) for many useful suggestions.

\appendix
\section{Appendix}

\label{appen}

The results in this 
section explain when the functions required in the proof of \thmref{dtn:thm:main}
are defined. Note that an analytic function in domain $D$ is always defined at
all $z\in D$. 

\begin{theorem} \label{dtn:thm:fd}
	Let $\alpha$ and $\beta$ be any two complex numbers and $\theta_1,\theta \in (-\pi,\pi)$. 
	Define a complex function 
	\begin{align*}
	f_d(z,\theta) := \begin{cases}
	\frac{1}{\rj z}\left( -\eta + \log\frac{\Omega(z,t)}{\Omega(z,t_1)} \right), & z\ne 0 \\
	\frac{\alpha }{\rj}(\e^{-\jth_1} -\emjth) & \text{elsewhere}
	\end{cases}
	\end{align*}
	 where $\eta:=\log\left(\frac{\tan(0.5\theta)-\rj}{\tan(0.5 \theta_1)-\rj}\right)$, $t:= \tan(\frac{\theta}{2})$ and $t_1:= \tan(\frac{\theta_1}{2})$ and $\Omega(z,t):=t (1-\beta z +\alpha z)- \rj ( 1-\beta z -\alpha z)$ for a real $t$.
	Assume that $\alpha\ne 0$. Then, $f_d(z,\theta)$ is analytic in 
	$\mb{C} \backslash \hat{D}$ where $\hat{D}:=\{ z \in \mb{C} |  \frac{\alpha z}{1 -\beta z}= \ejth, \wrap(\phi) \in [\theta_1,\theta] \} $. 
	Here, $\mb{C} \backslash \hat{D}$ is an open set.
\end{theorem}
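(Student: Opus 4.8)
The plan is to reduce the two–parameter expression to a single Möbius function of $w := \frac{\alpha z}{1-\beta z}$, and then to locate exactly where the principal logarithm's branch cut is crossed. First I would factor $\Omega$. Writing $\Omega(z,t) = (1-\beta z)(t-\rj) + \alpha z(t+\rj)$ and using the half-angle identity $\e^{\rj\theta} = -\frac{t-\rj}{t+\rj}$ gives, for $z\neq 1/\beta$, the factorization $\Omega(z,t) = (1-\beta z)(t+\rj)(w-\e^{\rj\theta})$. The factor $(1-\beta z)$ therefore cancels in the quotient, leaving
\[
\frac{\Omega(z,t)}{\Omega(z,t_1)} = \frac{(t+\rj)(w-\e^{\rj\theta})}{(t_1+\rj)(w-\e^{\rj\theta_1})} =: R(w),
\]
a Möbius function of the single variable $w$. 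Since the principal logarithm is analytic off $\cnR$ and $w(z)$ is itself a Möbius map of $z$ (hence analytic), $f_d$ is analytic at every $z\neq 0$ for which simultaneously $R(w(z))\notin\cnR$ and $\Omega(z,t_1)\neq 0$.

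The crux, and the step I expect to be the main obstacle, is to show that the bad set just described is exactly $\hat{D}$, i.e.\ that $R(w)\in\cnR$ or $\Omega(z,t_1)=0$ precisely when $w$ lies on the arc $\{\e^{\rj\psi}:\wrap(\psi)\in[\theta_1,\theta]\}$. I would evaluate $R$ on the unit circle: substituting $w=\e^{\rj\psi}$ and using $\e^{\rj a}-\e^{\rj b}=2\rj\,\e^{\rj(a+b)/2}\sin\frac{a-b}{2}$ together with $t+\rj=\rj\,\e^{-\rj\theta/2}/\cos(\theta/2)$ collapses every phase factor and yields the \emph{real} quantity
\[
R(\e^{\rj\psi}) = \frac{\cos(\theta_1/2)}{\cos(\theta/2)}\cdot\frac{\sin\frac{\psi-\theta}{2}}{\sin\frac{\psi-\theta_1}{2}}.
\]
For $\theta,\theta_1\in(-\pi,\pi)$ the prefactor is positive, so $R\le 0$ exactly when the two sines have opposite signs, which a short interval check shows happens precisely for $\wrap(\psi)\in[\theta_1,\theta]$; the endpoints give $R=0$ at $\psi=\theta$ and the pole $R=\infty$ at $\psi=\theta_1$, where $\Omega(z,t_1)=0$. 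Because a Möbius map sends generalized circles to generalized circles and $R^{-1}(\mb{R}\cup\{\infty\})$ already contains a whole arc of the unit circle (infinitely many points), that preimage must be the unit circle itself; hence $R(w)\in\cnR$ iff $w$ lies on the stated arc. Pulling this back through the Möbius map $w(z)$ identifies the excluded set as exactly $\hat{D}$.

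Finally I would dispose of $z=0$ and of openness. At $z=0$ one has $w=0$ and $R(0)=\frac{t-\rj}{t_1-\rj}$, so $-\eta+\log R(0)=0$ and the singularity of $\frac{1}{\rj z}(-\eta+\log R)$ is removable; computing the limit through the derivative, $\frac{1}{\rj}\,w'(0)\big(\tfrac{1}{-\e^{\rj\theta}}-\tfrac{1}{-\e^{\rj\theta_1}}\big)=\frac{\alpha}{\rj}(\e^{-\rj\theta_1}-\e^{-\rj\theta})$ since $w'(0)=\alpha$, and this matches the prescribed value at $z=0$, so $f_d$ extends analytically there. Openness of $\mb{C}\setminus\hat{D}$ then follows because $\hat{D}$ is the $w(z)$-preimage of the closed (indeed compact) arc, hence closed; equivalently, the defining conditions $R(w)\notin\cnR$ and $\Omega(z,t_1)\neq 0$ are manifestly open. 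The only delicate bookkeeping is the sign analysis of the sine ratio on the arc, which is where I expect the argument to require care.
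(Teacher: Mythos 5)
Your proposal is correct, and although it shares the paper's overall skeleton --- reduce everything to the single variable $w=\frac{\alpha z}{1-\beta z}$, show the quotient inside the logarithm meets $\cnR$ exactly when $w$ lies on the arc $\{\e^{\rj\psi} : \wrap(\psi)\in[\theta_1,\theta]\}$, treat $z=0$ as a removable singularity, and obtain openness by pulling back the closed arc --- your mechanics at the crux are genuinely different and in one respect stronger. The paper argues by contradiction: it supposes $\frac{\Omega(z,t)}{\Omega(z,t_1)}=-\rho$ with $\rho\ge 0$, \emph{asserts} that this forces $a=\frac{\alpha z}{1-\beta z}$ onto the unit circle, and then uses the identity $\tan\frac{\psi}{2}=\frac{\rho t_1+t}{\rho+1}$ (a convex combination of $t_1$ and $t$) to place $\psi$ in $[\theta_1,\theta]$. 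You instead factor $\Omega(z,t)=(1-\beta z)(t+\rj)\left(w-\e^{\rj\theta}\right)$, exhibiting the quotient as an explicit M\"{o}bius function $R(w)$ with zero at $\e^{\rj\theta}$ and pole at $\e^{\rj\theta_1}$, compute $R$ on the unit circle as a real sine ratio, and justify via circle-preservation (a generalized circle is determined by three points) precisely the step the paper leaves unproved, namely that $R(w)$ real forces $|w|=1$. Your version also makes the endpoint behaviour explicit ($R=0$ at $\psi=\theta$, which matters because $0\in\cnR$; the pole $\Omega(z,t_1)=0$ at $\psi=\theta_1$), explaining why the excluded arc is closed, and it absorbs the paper's separate case $1-\beta z=0$ into the point $w=\infty$: since $\infty$ is off the unit circle, $R(\infty)=\frac{t+\rj}{t_1+\rj}\notin\cnR$, which is equivalent to the paper's direct check of $\log\frac{t+\rj}{t_1+\rj}$. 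Your limit computation at $z=0$ via $w'(0)=\alpha$ reproduces the paper's removable-singularity step with the derivative made explicit, and as a bonus your framework shows $\eta=\log R(0)$ is automatically well defined since $0$ is off the unit circle. Two minor caveats, both shared with the paper: the statement implicitly assumes $\theta_1\le\theta$, and in the degenerate case $\theta=\theta_1$ the map $R$ is constant, so your ``infinitely many points on the preimage circle'' argument is vacuous --- but there $f_d\equiv 0$ and the claim is trivial. In short: same strategy, but your M\"{o}bius-factorization route is more complete than the paper's at its one glossed step, at the modest cost of invoking circle-preservation of M\"{o}bius maps.
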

\begin{proof}
	It is straightforward to verify that $\eta$ is well defined.

	It is now shown that $\mb{C} \backslash \hat{D}$ is an open set.
	Assume $\beta =0$, then the result is trivial.
	Assume $\beta\ne 0 $ then $g(z):=\frac{\alpha z}{1 -\beta z}$ is continuous apart from the point $z=\frac{1}{\beta}$.
	Since the set $Y:=\{\e^{\rj \phi} | \phi \in [\theta_1,\theta]\}$ is closed  and $g^{-1}(Y)$ does not contain $z=\frac{1}{\beta}$,
	continuity of $g$ in $\mb{C} \backslash \{0\}$ implies that $\mb{C} \backslash \hat{D}$ is an open set.

To check whether  $\log\frac{\Omega(z,t)}{\Omega(z,t_1)}$ is well defined or not,
first assume that $1-\beta z\ne 0$.
Then,  $\Omega(z,t_1) = (t_1+\rj) (-\e^{\jth_1}(1-\beta z) + \alpha z)$.
Hence,  $\Omega(z,t_1)$ is invertible as $z\notin \hat{D}$.
Assume that the principal log does not exist for a  $z\notin \hat{D}$. This means 
\begin{align*}
\log\frac{\Omega(z,t)}{\Omega(z,t_1)}
&=  \frac{t (1+a)- \rj ( 1-a)}{t_1 (1+a)- \rj ( 1-a)} =-\rho
\end{align*}
for all $\rho\ge 0$. Here, $a:=\frac{\alpha z}{1-\beta z}$.
The above implies that $a$ must be on unit circle i.e. $a=\e^{\rj \psi}$.
If $a=-1$ (i.e.\ $\psi=\pi$) then it is trivial to see that the principal log exist.
On the other hand if $a=\e^{\rj \psi}$ and $a\ne -1$ then 
\begin{align*}
\frac{(1+a)t-\rj (1-a)}{(1+a)t_1-\rj (1-a)}
&= \frac{t-\tan{\frac{\psi}{2}}}{t_1-\tan{\frac{\psi}{2}}} = -\rho
\end{align*}
iff
$\tan{\frac{\psi}{2}} = \frac{\rho t_1 +t}{\rho+1}$.
This means $\psi \in [\theta_1,\theta]$. Hence, $z\in\hat{D}$. Contradiction. 
Therefore, the principal log exists for all  $z\notin \hat{D}$.

Now, assume $1-\beta z = 0$.
Then, $\beta\ne 0$, $z\ne 0$ and $\Omega(z,t_1) = (t_1+\rj) \alpha z $.
Hence,  $\Omega(z,t_1)$ is invertible as $\alpha \ne 0$.
Also,  $\log\frac{\Omega(z,t)}{\Omega(z,t_1)} = \log(\frac{t+\rj}{t_1+\rj})$.
It is straightforward to see that this logarithm exists.

The above analysis implies that
$\frac{1}{\rj z}\left( -\eta + \log\frac{\Omega(z,t)}{\Omega(z,t_1)} \right)$ is analytic on an open set $\mb{C}\backslash \hat{D}$ 
apart from $z=0$ where it has a removable singularity. 
Clearly, $\lim_{z\to 0} f_d(z,\theta) = f_d(0,\theta)$. 
Hence, $f_d(z,\theta)$ is analytic in $\mb{C}\backslash \hat{D}$ (see e.g.\ \cite[\S 16.20]{apostol}). 
\end{proof}

The proof of the following result is similar to the proof of \thmref{dtn:thm:fd}.

\begin{theorem}\label{dtn:thm:fi}
	Let $\eta$ be as in \thmref{dtn:thm:fd} and $\theta_1,\theta \in (-\pi,\pi)$.
	Define a complex function 
	\begin{align*}
	f_i(z,\theta) := 
	\frac{1}{\rj}\left( -\eta + \log\frac{t(z+1) -\rj (z-1) }{t_1(z+1) -\rj (z-1)} \right)
	\end{align*}
	where $t:= \tan(\frac{\theta}{2})$ and $t_1:= \tan(\frac{\theta_1}{2})$. 
	Now, $f_i(z,\theta)$ is analytic in  $\mb{C} \backslash \tilde{D}$ where $\tilde{D}:=\{ \e^{-\rj \phi} | \wrap(\phi) \in [\theta_1,\theta] \} $. 
	Here, $\mb{C} \backslash \hat{D}$ is an open set.
\end{theorem}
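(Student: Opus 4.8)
The plan is to mirror the proof of \thmref{dtn:thm:fd} as closely as possible, using the fact that $\eta$ is already shown to be well defined there. First I would record that $\mb{C}\backslash\tilde{D}$ is open: since $\theta_1,\theta\in(-\pi,\pi)$, the set $\tilde{D}=\{\e^{-\rj\phi}\mid \wrap(\phi)\in[\theta_1,\theta]\}$ is just the closed circular arc $\{\e^{-\rj\psi}\mid \psi\in[\theta_1,\theta]\}$ on the unit circle, hence compact, so its complement is open. This step is immediate and replaces the more delicate openness argument needed for $f_d$, where the preimage under $z\mapsto \frac{\alpha z}{1-\beta z}$ had to be handled.

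Next I would isolate the two obstructions to analyticity of $f_i$ at a point $z$: the denominator $t_1(z+1)-\rj(z-1)$ of the logarithm's argument must be nonzero, and the argument itself must avoid the closed negative real axis $\cnR$ so that the principal logarithm is analytic there. A short computation shows the denominator vanishes only at $z=\e^{-\rj\theta_1}\in\tilde{D}$, and similarly the numerator vanishes only at $z=\e^{-\rj\theta}\in\tilde{D}$. Hence for $z\notin\tilde{D}$ the quantity $w(z):=\frac{t(z+1)-\rj(z-1)}{t_1(z+1)-\rj(z-1)}$ is a well-defined, nonzero, rational (therefore analytic) function of $z$.

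The main step, which I expect to carry the real content, is showing that $w(z)\in\cnR$ forces $z\in\tilde{D}$. Following the $f_d$ argument, I would suppose $w(z)=-\rho$ with $\rho\ge 0$, clear denominators, and solve the linear equation $(t+\rho t_1)(z+1)-\rj(1+\rho)(z-1)=0$ for $z$. Because the coefficients $t+\rho t_1$ and $1+\rho$ are real, the solution satisfies $|z|=1$ and can be written as $z=\e^{-\rj\psi}$ with $\tan\frac{\psi}{2}=\frac{\rho t_1+t}{\rho+1}$, exactly the expression obtained in the proof of \thmref{dtn:thm:fd}. As $\rho$ runs over $[0,\infty)$ this is a convex combination of $t$ and $t_1$, so $\tan\frac{\psi}{2}$ lies between $t_1$ and $t$; monotonicity of $\tan$ on $(-\frac{\pi}{2},\frac{\pi}{2})$ then places $\psi$ in $[\theta_1,\theta]$, i.e.\ $z\in\tilde{D}$, contradicting $z\notin\tilde{D}$. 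The endpoints $\rho=0$ and $\rho\to\infty$ recover the zero $\e^{-\rj\theta}$ and the pole $\e^{-\rj\theta_1}$ identified above, and the subcase $t+\rho t_1=0$ is subsumed (it gives $z=1$, $\psi=0$), so no separate degenerate case arises.

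Finally I would assemble the pieces: for every $z\in\mb{C}\backslash\tilde{D}$ the argument $w(z)$ is analytic and avoids $\cnR$, so $\log w(z)$ is analytic, and therefore $f_i(z,\theta)=\frac{1}{\rj}\left(-\eta+\log w(z)\right)$ is analytic on the open set $\mb{C}\backslash\tilde{D}$. Unlike $f_d$, there is no $\frac{1}{z}$ prefactor here, so no removable singularity at the origin has to be examined, which makes this proof strictly shorter than that of \thmref{dtn:thm:fd}.
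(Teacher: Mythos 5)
Your proposal is correct and follows exactly the route the paper intends: the paper gives no separate proof for this theorem, stating only that it is ``similar to the proof of \thmref{dtn:thm:fd}'', and your adaptation (openness of $\mb{C}\backslash\tilde{D}$ via compactness of the arc, locating the zero $\e^{-\rj\theta}$ and pole $\e^{-\rj\theta_1}$ inside $\tilde{D}$, and the $w(z)=-\rho$ computation yielding $\tan\frac{\psi}{2}=\frac{\rho t_1+t}{\rho+1}\in[\min(t_1,t),\max(t_1,t)]$, hence $z=\e^{-\rj\psi}\in\tilde{D}$) is precisely that argument specialized to $f_i$. Your observation that the removable-singularity step at $z=0$ drops out, since $f_i$ has no $\frac{1}{z}$ prefactor, is also accurate.
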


\newcommand{\noopsort}[1]{}

\end{document}